\documentclass[11pt, a4paper]{article}
\usepackage[margin=1in]{geometry}
\usepackage{amsfonts, amsmath, amssymb, bm}
\usepackage{tabu}
\usepackage{enumerate}
\usepackage{tgcursor}
\usepackage{xcolor}
\usepackage{graphicx}
\usepackage{placeins}
\usepackage{listings}
\usepackage{stackengine}
\usepackage{caption}
\usepackage{subcaption}
\usepackage{setspace}
\usepackage{footnote}
\usepackage{comment}
\usepackage{footmisc}
\usepackage{hyperref}
\usepackage{eucal}
\usepackage{mathrsfs}
\usepackage{natbib}
\usepackage{authblk}
\usepackage[mathlines]{lineno} 
\usepackage{amsthm}

\usepackage{enumitem}

\doublespacing

\newcommand*\linenomathpatch[1]{%
  \cspreto{#1}{\linenomath}%
  \cspreto{#1*}{\linenomath}%
  \csappto{end#1}{\endlinenomath}%
  \csappto{end#1*}{\endlinenomath}%
}
\newcommand*\linenomathpatchAMS[1]{%
  \cspreto{#1}{\linenomathAMS}%
  \cspreto{#1*}{\linenomathAMS}%
  \csappto{end#1}{\endlinenomath}%
  \csappto{end#1*}{\endlinenomath}%
}

\expandafter\ifx\linenomath\linenomathWithnumbers
  \let\linenomathAMS\linenomathWithnumbers
  \patchcmd\linenomathAMS{\advance\postdisplaypenalty\linenopenalty}{}{}{}
\else
  \let\linenomathAMS\linenomathNonumbers
\fi

\linenomathpatch{equation}
\linenomathpatchAMS{gather}
\linenomathpatchAMS{multline}
\linenomathpatchAMS{align}
\linenomathpatchAMS{alignat}
\linenomathpatchAMS{flalign}

\newtheorem{proposition}{Proposition}
\newtheorem{lemma}{Lemma}

\renewcommand{\vec}[1]{\boldsymbol{#1}}

\newcommand{\trp}{\,\prime}

\newcommand{\va}{\mathbf{a}}
\newcommand{\vA}{\mathbf{A}}

\newcommand{\vB}{\mathbf{B}}

\newcommand{\vC}{\mathbf{C}}

\newcommand{\vD}{\mathbf{D}}

\newcommand{\vE}{\mathbf{E}}

\newcommand{\vI}{\mathbf{I}}

\newcommand{\vm}{\mathbf{m}}
\newcommand{\vM}{\mathbf{M}}

\newcommand{\vN}{\mathbf{N}}

\newcommand{\vP}{\mathbf{P}}

\newcommand{\vR}{\mathbf{R}}
\newcommand{\vs}{\mathbf{s}}
\newcommand{\vS}{\mathbf{S}}
\newcommand{\vt}{\mathbf{t}}
\newcommand{\vT}{\mathbf{T}}
\newcommand{\vu}{\mathbf{u}}
\newcommand{\vU}{\mathbf{U}}
\newcommand{\vv}{\mathbf{v}}
\newcommand{\vV}{\mathbf{V}}
\newcommand{\vw}{\mathbf{w}}
\newcommand{\vW}{\mathbf{W}}
\newcommand{\vx}{\mathbf{x}}
\newcommand{\vX}{\mathbf{X}}

\newcommand{\vY}{\mathbf{Y}}

\newcommand{\vz}{\mathbf{z}}
\newcommand{\vZ}{\mathbf{Z}}



\newcommand{\vbeta}{\boldsymbol{\beta}}

\newcommand{\veta}{\boldsymbol{\eta}}

\newcommand{\vLambda}{\boldsymbol{\Lambda}}

\newcommand{\vOmega}{\boldsymbol{\Omega}}

\newcommand{\vPhi}{\boldsymbol{\Phi}}

\newcommand{\vrho}{\boldsymbol{\rho}}

\newcommand{\vSigma}{\boldsymbol{\Sigma}}

\newcommand{\vtheta}{\boldsymbol{\theta}}

\newcommand{\vxi}{\boldsymbol{\xi}}
\newcommand{\vXi}{\boldsymbol{\Xi}}


\hypersetup{
    colorlinks=true,
    linkcolor=blue,
    urlcolor=blue,
    citecolor=blue
}


\title{\vspace{-2cm}\LARGE\centering\normalfont{A flexible class of priors for orthonormal matrices with basis function-specific structure}}
\date{\vspace{-10mm}}

\author[1,$\ast$]{Joshua S. North}
\author[1]{Mark D. Risser}
\author[2]{F. Jay Breidt}
\affil[1]{Climate and Ecosystem Sciences Division, Lawrence Berkeley National Laboratory}
\affil[2]{Department of Statistics and Data Science, NORC at the University of Chicago}

\affil[*]{Corresponding author: jsnorth@lbl.gov}

\begin{document}

\maketitle


\abstract{
Statistical modeling of high-dimensional matrix-valued data motivates the use of a low-rank representation that simultaneously summarizes key characteristics of the data and enables dimension reduction. 
Low-rank representations commonly factor the original data into the product of orthonormal basis functions and weights, where each basis function represents an independent feature of the data.
However, the basis functions in these factorizations are typically computed using algorithmic methods that cannot quantify uncertainty or account for basis function correlation structure \textit{a priori}.
While there exist Bayesian methods that allow for a common correlation structure across basis functions, empirical examples motivate the need for basis function-specific dependence structure.
We propose a prior distribution for orthonormal matrices that can explicitly model basis function-specific structure.
The prior is used within a general probabilistic model for singular value decomposition to conduct posterior inference on the basis functions while accounting for measurement error and fixed effects.
We discuss how the prior specification can be used for various scenarios and demonstrate favorable model properties through synthetic data examples. 
Finally, we apply our method to two-meter air temperature data from the Pacific Northwest, enhancing our understanding of the Earth system's internal variability.

    \begin{center}
        \textit{Key Words: Bayesian Singular Value Decomposition, Probabilistic Low-Rank Representation, Probabilistic Basis Functions, Stiefel Manifold, Spatio-Temporal Random Effect}
    \end{center}
    
}

\section{Introduction}

\subsection{Orthonormal matrices in statistical modeling}

Within the field of statistics, orthonormal matrices are the cornerstone of many modeling approaches, including exploratory data analysis, factor analysis \citep{Harman1976, Mulaik2009}, principal component analysis \citep[PCA;][]{Hotelling1933, Jolliffe2002}, singular value decomposition \citep[SVD;][]{Stewart1993}, and proper orthogonal decomposition \citep[POD;][]{Berkooz1993}.
Each of these techniques uses orthonormal matrices to decompose matrix-valued data with the goal of summarizing its key characteristics as well as dimension reduction \citep{Kambhatla1997} and data compression \citep{Chen2022}. 
Across many areas of science, technology, and medicine, orthonormal matrix factorizations of data are highly useful because the measurements of interest in these fields often arise from lower-dimensional processes with physically interpretable structures.
Examples include factor analysis in physiological studies \citep{Fabrigar1999}, PCA in geography \citep{Roden2015} and ecology \citep{Jackson1993, Peres-Neto2003}, and SVD and PCA for medical imaging \citep{Smith2014}.

For mean-zero data $\vY \in \mathbb{R}^{n \times m}$, SVD decomposes $\vY = \vU \vD \vV'$, where $\vU \in \mathbb{R}^{n \times l}$ is an orthonormal matrix, $\vD \in \mathbb{R}^{l \times l}$ is a diagonal matrix, $\vV \in \mathbb{R}^{m \times l}$ is an orthonormal matrix, and $l = \text{min}\{n, m\}$. 
Alternatively, PCA decomposes $\vY \vY' = \vA \vB \vA'$, where now $\vA \in \mathbb{R}^{n \times l}$ is an orthonormal matrix whose columns are the eigenvectors of $\vY \vY'$, $\vB \in \mathbb{R}^{l \times l}$ is a diagonal matrix whose elements are the eigenvalues of $\vY \vY'$, and $l = \text{min}\{n, m\}$.
Note that the equivalence between SVD and PCA comes from $ \vY \vY'=\left(\vU \vD \vV'\right)\left( \vV \vD' \vU'\right)=\vU \vD \vD' \vU' = \vA \vB \vA'$, where the diagonal elements of $\vD$ are the square root of the eigenvalues of $\vY \vY'$, the columns of $\vU$ are the eigenvectors of $\vY \vY'$, and the columns of $\vV$ are the eigenvectors of $\vY' \vY$.

In the climate sciences where data are spatially- and temporally-oriented, the columns of orthonormal matrices define empirical orthogonal functions \citep[EOFs;][]{Lorenz1956, North1982, Hannachi2007}, which are analogous to PCA.
EOFs are used to summarize modes of climate variability \citep[see, e.g.,][]{Thompson2000,Mantua2002}, identify the drivers of extreme weather events \citep{Grotjahn2016}, and quantify human-induced changes to the global climate system \citep{OBrien2023}. 
Additionally, spatial modeling of climate data often uses EOFs to incorporate spatial and temporal information via spatially-indexed basis functions and spatial random effects \citep{Stroud2001, Nychka2002, Cressie2006, Cressie2008}.

\subsection{Inference and challenges}

The basis functions contained in the orthonormal matrices $\vU$ and $\vV$ and the elements of $\vD$ are traditionally computed via iterative methods \citep{Golub1965, Demmel1990}, which we refer to as classical SVD (C-SVD or C-PCA) henceforth.
However, these classical procedures have several important limitations.
First, when $n$ is large with respect to $m$, the basis functions contained in the orthonormal matrices estimated from C-SVD can be noisy and therefore lose their physical interpretation \citep{Wang2017}.
C-SVD and C-PCA are not able to distinguish between measurement and signal variation, which means that estimates of the basis functions are heavily influenced by the presence of measurement noise \citep{Bailey2012, Epps2019}.
Furthermore, since their algorithms are deterministic, C-PCA and C-SVD do not provide measures of uncertainty in either the basis functions or their weights.
Finally, the estimated basis functions only exhibit dependence or structure implicitly via data correlations since C-PCA and C-SVD cannot take advantage of explicit structure that may be present in the data generating mechanisms.

A variety of approaches have been developed to address limitations associated with C-SVD and C-PCA. 
Regarding the issue of noise, large $n$ with small $m$, and structure in the basis functions, a regularized PCA approach can be adopted \citep{Shen2008, Zou2006, Jolliffe2002b}.
\citet{Wang2017} extend the regularization approach by incorporating smoothness and local features into their penalization using smoothing splines and an $\ell_1$ penalty, producing spatially explicit orthogonal basis functions.
To further account for uncertainty quantification in the basis function, one possibility is to take a Bayesian approach and specify a prior distribution for the orthonormal matrix.
The set of orthonormal matrices $\mathcal{V}_{k,n}=\{\vX\in\mathbb{R}^{n\times k}:\vX' \vX = \vI_k\}$, where $\vI_k$ is the $k \times k$ identity matrix, is called the Stiefel manifold \citep{Chikuse2003}.
Considerable effort has been put into understanding theoretical properties associated with distributions on the Stiefel manifold and optimal methods for computation and sampling \citep{Mardia1999, Chikuse2003, Hoff2007, Hoff2009, Byrne2013, Wang2013, Wang2014, Hernandez-Stumpfhauser2017, Pourzanjani2021, Jauch2021}.
\citet{Hoff2007} developed a uniform prior for orthonormal basis functions (the invariant or uniform measure on the Stiefel manifold) that enables the specification of a Bayesian SVD model, and showed how to sample from the full conditional distributions of the model.
However, the approach in \citet{Hoff2007} requires sampling from the von Mises-Fisher (or Bingam-von Mises-Fisher) distributions, which can be difficult, and does not allow for the basis functions to be structured.
Additionally, support for these distributions in probabilistic programming languages such as Stan is limited \citep{Carpenter2017}, providing yet another barrier for implementation.
\cite{Hoff2009} and \cite{Byrne2013} propose tractable methods for sampling from von Mises-Fisher distributions, but these require the underlying statistical model to abide by specific conditions and forms which limits the application areas.
Recent work by \citet{Pourzanjani2021} and \citet{Jauch2021} addresses both sampling and flexibility of distributions on the Stiefel manifold by simulating unconstrained random vectors (i.e., not orthogonal and not unit-length) and then transforming these draws to be orthonormal via an appropriate Jacobian to obtain samples on the Stiefel manifold.
Importantly, these methods are computationally efficient, can be incorporated into probabilistic programming languages, and allow for the basis functions to be modeled dependently.
However, the dependence structure is limited in that it is shared across the basis functions and is unable to accommodate the basis function-specific structures that are present in real-world data sets. 

Particularly in the climate sciences, the physical structures summarized by orthonormal matrices have different scales (e.g., spatial or temporal), wherein the leading modes or basis functions reflect larger-scale variability while the later modes reflect finer-scale variability.
To illustrate this, we calculated the SVD of monthly maximum two-meter air temperature from a $0.25^\circ \times 0.25^\circ$ longitude-latitude grid over the United States Pacific Northwest from 1979 through 2022 (see Section \ref{sec:SAT} for details on the data) using standard statistical software.
We then estimate the length-scale of a Gaussian variogram for each spatial and temporal basis function, the columns of the left- and right- singular matrices, respectively.
Figure \ref{fig:lengthScaleMotivation} shows empirical estimates of the length-scale for each basis function for $\vU$ and $\vV$ in panels a) and b), respectively.
From this figure it is clear the length-scale of the leading modes for both the left- and right- singular matrices is at least one order of magnitude larger than that of the later modes, following a quasi-exponentially decreasing trend. 
This suggests estimating a common spatial or temporal structure for all of the basis functions will miss important features of the data, resulting in oversmoothing and underfitting for the leading modes and undersmoothing and overfitting for the later modes.

\begin{figure}[!t]
    \centering
    \includegraphics[width = 0.95\linewidth]{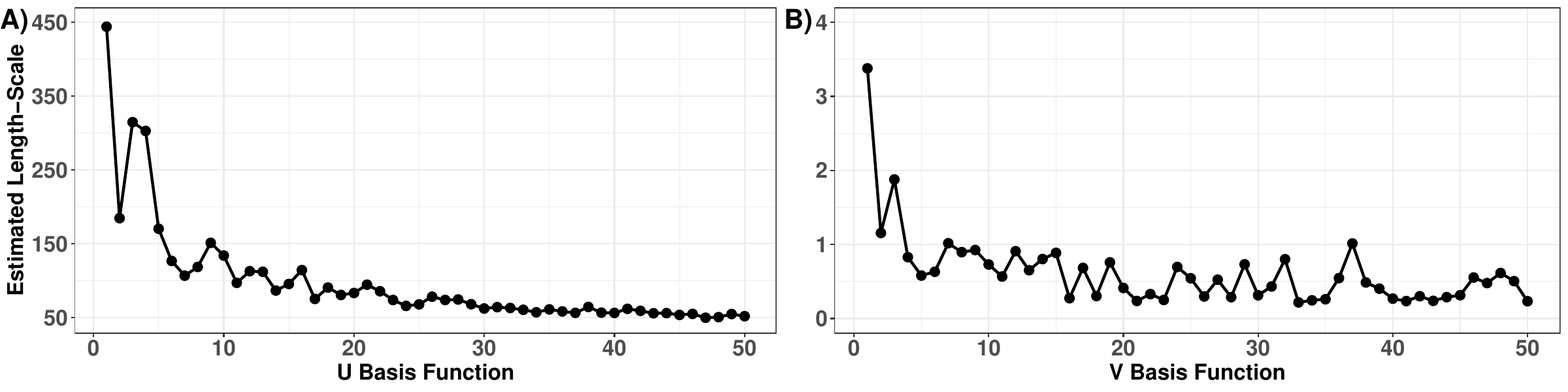}
    \caption{Estimated length-scale from a fitted Gaussian variogram for each spatial and temporal basis function computed from the singular value decomposition of the two-meter air surface temperature data described in Section \ref{sec:SAT}.}
    \label{fig:lengthScaleMotivation}
\end{figure}

\subsection{Contributions}

Here, we develop a prior distribution for orthonormal matrices that enables basis function specific structure and construct a probabilistic model for SVD.
The resulting full conditional distributions for the basis functions are available in closed form, yielding an analytically straightforward posterior for sampling orthonormal matrices.
Furthermore, we discuss how the prior can be used for a variety of modeling purposes.
Our prior is in general not uniformly distributed on $\mathcal{V}_{k,n}$ (although the uniform distribution is a special case) and we are able to impart information into the prior through our specification of a correlation matrix.
We show how the correlation matrix can be specified to either impart smoothing onto the basis functions \citep[producing results similar to][]{Wang2017} or recover the prior developed by \citet{Hoff2007}, and also demonstrate how the mean of the full conditional distributions for the basis functions of our probabilistic SVD model coincides with the classical approach (C-SVD) under certain conditions.
Our resulting prior, along with the proposed Bayesian hierarchical model, is quite general and allows each basis function to have a unique dependence structure that is learned from the data, which has not been previously possible.

The remainder of the manuscript is organized as follows. 
Section~\ref{sec:Prior} develops the prior distribution for orthonormal matrices.
Section~\ref{sec:GPM} proposes a general probabilistic model for matrix factorizations with a specific focus on SVD and then expands other possible modeling choices.
Three simulation studies are conducted in section~\ref{sec:dataExamples}, where we show the importance of basis function-specific structure, the model rank and signal-to-noise ratios, and the impact a linear trend has on basis function recovery.
In Section~\ref{sec:SAT}, we apply our probabilistic model for SVD to decompose monthly maximum two-meter air temperature into its major modes of variability and provide uncertainty bounds for these modes, allowing better understanding of the spatial relationships in the data and illustrating the importance of basis function-specific structure.
Section~\ref{sec:discussion} concludes the paper.

\section{A prior distribution for orthonormal matrices with basis function-specific structure}\label{sec:Prior}

We construct a prior distribution for matrices on the Stiefel manifold $\mathcal{V}_{k,n}$ that is conjugate with a normal likelihood model.
The prior is constructed from the projected normal distribution that has been augmented with a latent length (see \citealt{Wang2013, Wang2014} and \citealt{Hernandez-Stumpfhauser2017} and the references therein for details on the projected normal).

\subsection{Generating mechanism}\label{sec:GeneratingMechanism}

One method of drawing an orthonormal matrix from the uniform distribution on $\mathcal{V}_{k,n}$ is outlined in the appendix of \citet{Hoff2007}.
As part of the construction, the underlying normal distribution from which the orthonormal matrix is generated specifies the identity matrix as the covariance, and the resulting distribution is uniform on $\mathcal{V}_{k,n}$.
Here, we extend this generating mechanism to allow for structure in its covariance, specific to each column, such that the prior implied by \cite{Hoff2007} is a special case.
By construction, the resulting distribution is not necessarily uniform on $\mathcal{V}_{k,n}$.

For fixed $k$, let $\vz_i$ independent $ \mbox{N}_n(\vec{0}, \vOmega_i)$ and $\vOmega_i \sim \pi_{\Omega}$, for $i=1,2,\ldots,k$, where $\pi_{\Omega}$ is a valid distribution for symmetric positive definite matrices.
Define $\vP_0=\vI_n$, $\vx_1=\vP_0\vz_1$, and 
\begin{align*}
    \vX_i= [\vx_1, \ldots, \vx_i], \quad \vP_i= \vI_n - \vX_i (\vX'_i \vX_i)^{-1} \vX'_i, \quad \vx_{i+1}=\vP_i\vz_{i+1}
\end{align*}
for $i=1,2,\ldots,k-1$.
Then $\vx_i|\vX_{i-1} \sim \mbox{N}_n(\vec{0}, \vP_{i-1}\vOmega_i\vP'_{i-1})$ and $\vx'_i \vx_j = 0$ for $i \neq j$.
Further, define
\begin{align}\label{eqn:orthomatrix}
    \vw_i=\frac{\vx_i}{(\vx_i'\vx_i)^{1/2}},\quad \vW_i=[\vw_1,\ldots,\vw_i]
\end{align}
for $i=1,2,\ldots,k$.
By construction, $\vW_k \in \mathcal{V}_{k,n}$ is an orthonormal matrix.
The conditional distributions of each column given the preceding columns are $\vw_i |\vW_{i-1} \sim \mbox{PN}_n(\vec{0}, \vP_{i-1}\vOmega_i\vP'_{i-1})$, where $\mbox{PN}_n(\cdot, \cdot)$ denotes the $n$-dimensional projected normal distribution \citep{Wang2013, Wang2014, Hernandez-Stumpfhauser2017}.

Let $\overset{d}{=}$ denote equality in distribution.
We now provide two key properties associated with the distribution of $\vW \equiv \vW_k$ based on the constructed matrix $\vX \equiv \vX_k$, with proofs deferred to appendix~\ref{sec:props}. 

\begin{proposition}\label{prop:one}
    The columns of $\vW = \vX (\vX' \vX)^{-1/2}$ are exchangeable. That is, for any permutation $\pi$ of the set $\{1, \ldots, k\}$, $p([\vw_1, \ldots, \vw_k]) \overset{d}{=} p([\vw_{\pi(1)}, \ldots, \vw_{\pi(k)}])$.
\end{proposition}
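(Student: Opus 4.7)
The plan is to leverage the exchangeability of the underlying random matrix $\vZ=[\vz_1,\ldots,\vz_k]$ and transfer this symmetry through the orthonormalization map. Since $\vOmega_i \overset{\text{iid}}{\sim} \pi_{\Omega}$ and $\vz_i \mid \vOmega_i \sim \mbox{N}_n(\vec{0},\vOmega_i)$ independently, marginalizing over the $\vOmega_i$'s gives $\vz_1,\ldots,\vz_k$ i.i.d.\ with common marginal $p(\vz) = \int \mbox{N}_n(\vz;\vec{0},\vOmega)\,\pi_{\Omega}(d\vOmega)$. The joint density of $\vZ$ therefore factors as $\prod_{i=1}^k p(\vz_i)$ and is invariant under column permutation: $\vZ\vPi \overset{d}{=} \vZ$ for any $k\times k$ permutation matrix $\vPi$.

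Next, I would express $\vW$ as an explicit deterministic function of $\vZ$ and compute its induced density on $\mathcal{V}_{k,n}$. Because the columns of $\vX$ from Section~\ref{sec:GeneratingMechanism} are pairwise orthogonal, $\vX'\vX$ is diagonal and $\vW=\vX(\vX'\vX)^{-1/2}$ coincides with the orthonormal factor in the QR decomposition $\vZ = \vW\vR$, where $\vR$ is upper triangular with positive diagonal. The standard QR Jacobian is $\prod_{i=1}^k R_{ii}^{n-i}$, giving
\begin{equation*}
    p(\vW) \;=\; \int_{\mathcal{T}^+_k} \prod_{i=1}^k p\!\left(\sum_{j\le i} R_{ji}\,\vw_j\right)\prod_{i=1}^k R_{ii}^{n-i}\,d\vR,
\end{equation*}
where $\mathcal{T}^+_k$ is the set of $k\times k$ upper triangular matrices with positive diagonal. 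The proposition then reduces to showing that this marginal is a symmetric function of $(\vw_1,\ldots,\vw_k)$.

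The main obstacle is that the QR/Gram--Schmidt decomposition is \emph{not} permutation-equivariant, so the column-exchangeability of $\vZ$ does not transfer to $\vW$ by direct substitution into the integrand. I would sidestep this by introducing the polar decomposition $\vZ = \vU\vH$ with $\vU \in \mathcal{V}_{k,n}$ and $\vH$ symmetric positive definite, which \emph{is} permutation-equivariant: $\vZ\vPi = (\vU\vPi)(\vPi'\vH\vPi)$ is again a valid polar decomposition, so $\vZ\vPi \overset{d}{=} \vZ$ immediately yields $\vU\vPi \overset{d}{=} \vU$. Since $\vW$ and $\vU$ are two orthonormal bases of the same random subspace $\operatorname{col}(\vZ)$, they are linked by $\vW = \vU\vQ$ for some $k\times k$ orthogonal matrix $\vQ$ determined by $\vZ$. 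The final step is to argue that integrating out this coupling factor $\vQ$ preserves the column-exchangeability already established for $\vU$, thereby yielding the desired exchangeability for $\vW$.
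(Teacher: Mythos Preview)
Your argument is sound through the polar decomposition: $\vZ$ has i.i.d.\ columns after marginalizing the $\vOmega_i$, the polar map $\vZ\mapsto\vU$ is permutation-equivariant, and hence $\vU\vPi\overset{d}{=}\vU$. The gap is the last step. The factor $\vQ$ in $\vW=\vU\vQ$ is not an independent random variable that can be ``integrated out''; it is a deterministic function of $\vZ$---the orthogonal matrix that rotates the polar factor into the QR factor. Right-multiplying a column-exchangeable $\vU$ by a data-dependent $\vQ$ need not produce exchangeable columns, and this is exactly where the non-equivariance of Gram--Schmidt that you correctly flagged earlier resurfaces. To close the argument you would have to show that the joint law of $(\vU,\vQ)$ transforms under column permutation of $\vZ$ in precisely the way that leaves $\vU\vQ$ invariant, and nothing in the sketch supplies that.

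The paper takes a different route that never passes through a non-equivariant map on $\vZ$. Rather than orthonormalizing $\vZ$ in one shot, the proof runs an induction directly on the sequential construction $\vx_{i+1}=\vP_i\vz_{i+1}$ using characteristic functions, with two ingredients: Lemma~\ref{lemma:one} (the $\vz_i$ are exchangeable after marginalizing $\vOmega_i$) and Lemma~\ref{lemma:two} (the projector $\vP_i$ depends only on the column \emph{space} of $\vX_i$, hence is invariant to column order). This establishes exchangeability of $\vX$ itself; the passage to $\vW$ is then immediate because $(\vX'\vX)^{-1/2}$ is diagonal, so $\vX\mapsto\vW$ acts column-by-column and trivially commutes with permutation. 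Your polar-decomposition idea is elegant but sidesteps rather than uses this sequential structure; to salvage it you would need a separate argument characterizing how $\vQ$ transforms under $\vZ\mapsto\vZ\vPi$.
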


\begin{proposition}\label{prop:two}
    $\vw_i|\vW_{i-1} \overset{d}{=} \vN_{i-1}\widetilde{\vw}_i|\vW_{i-1}$ where the columns of $\vN_{i-1}$ form an orthonormal basis for the null space of $\vW_{i-1}$ and $\widetilde{\vw}_i$, the projected weight function, satisfies $\widetilde{\vw}_i|\vW_{i-1} \sim \emph{PN}_{n-i+1}(\vec{0}, \vN_{i-1}'\vOmega_i\vN_{i-1})$.
\end{proposition}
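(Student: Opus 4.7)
The plan is to exploit the fact that, conditional on $\vW_{i-1}$, the vector $\vx_i$ lies in the null space of $\vW_{i-1}'$, and to carry out the normalization inside this lower-dimensional subspace using the orthonormal basis $\vN_{i-1}$.

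First, I would observe that since $\vW_{i-1}$ and $\vX_{i-1}$ share the same column space, the projection matrix $\vP_{i-1} = \vI_n - \vX_{i-1}(\vX_{i-1}'\vX_{i-1})^{-1}\vX_{i-1}'$ from Section~\ref{sec:GeneratingMechanism} equals the orthogonal projection onto the null space of $\vW_{i-1}'$. Using the orthonormal basis $\vN_{i-1} \in \mathbb{R}^{n \times (n-i+1)}$ for that null space (so $\vN_{i-1}'\vN_{i-1} = \vI_{n-i+1}$), this projection factors as $\vP_{i-1} = \vN_{i-1}\vN_{i-1}'$. Consequently
\begin{equation*}
\vx_i \mid \vW_{i-1} = \vP_{i-1}\vz_i = \vN_{i-1}\bigl(\vN_{i-1}'\vz_i\bigr) \;\overset{d}{=}\; \vN_{i-1}\,\widetilde{\vx}_i,
\end{equation*}
where $\widetilde{\vx}_i \equiv \vN_{i-1}'\vz_i$ is Gaussian of dimension $n-i+1$ with mean zero and covariance $\vN_{i-1}'\vOmega_i\vN_{i-1}$, obtained by the standard linear-transformation rule.

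Next, I would use the orthonormality of the columns of $\vN_{i-1}$ to preserve norms under the embedding: $\|\vN_{i-1}\widetilde{\vx}_i\|^2 = \widetilde{\vx}_i'\vN_{i-1}'\vN_{i-1}\widetilde{\vx}_i = \widetilde{\vx}_i'\widetilde{\vx}_i$. Therefore dividing $\vx_i$ by its length is equivalent to dividing $\widetilde{\vx}_i$ by its length and then applying $\vN_{i-1}$:
\begin{equation*}
\vw_i = \frac{\vx_i}{(\vx_i'\vx_i)^{1/2}} \;\overset{d}{=}\; \vN_{i-1}\,\frac{\widetilde{\vx}_i}{(\widetilde{\vx}_i'\widetilde{\vx}_i)^{1/2}} \;=\; \vN_{i-1}\,\widetilde{\vw}_i.
\end{equation*}
By the definition of the projected normal recalled in Section~\ref{sec:Prior}, $\widetilde{\vw}_i \mid \vW_{i-1} \sim \mbox{PN}_{n-i+1}(\vec{0}, \vN_{i-1}'\vOmega_i\vN_{i-1})$, which is exactly the claim.

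I do not anticipate a substantive obstacle here: the argument is essentially a change-of-basis inside the null space of $\vW_{i-1}'$, and the only things to verify carefully are (i) the identification $\vP_{i-1} = \vN_{i-1}\vN_{i-1}'$ and (ii) the invariance of the Euclidean norm under multiplication by a matrix with orthonormal columns. The mildest subtlety is that $\vN_{i-1}$ is only defined up to right multiplication by an $(n-i+1) \times (n-i+1)$ orthogonal matrix, but this ambiguity is harmless because the distribution of $\widetilde{\vx}_i$ transforms consistently with the basis choice, leaving $\vw_i = \vN_{i-1}\widetilde{\vw}_i$ invariant in distribution.
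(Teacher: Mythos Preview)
Your proposal is correct and follows essentially the same approach as the paper: both arguments factor $\vP_{i-1} = \vN_{i-1}\vN_{i-1}'$, identify $\vN_{i-1}'\vz_i$ as an $(n-i+1)$-dimensional Gaussian with covariance $\vN_{i-1}'\vOmega_i\vN_{i-1}$, and then use norm-preservation under $\vN_{i-1}$ to pass to the projected normal. Your write-up is slightly more explicit about the norm-preservation step and the harmless non-uniqueness of $\vN_{i-1}$, but the logic is the same.
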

Proposition~\ref{prop:one} implies the columns of $\vW$ are exchangeable, and therefore the conditional distribution $\vw_i|\vW_{Q}$ is invariant to the choice of subset of columns $Q \subset \{1, \ldots, k\}$.
When proposition~\ref{prop:one} is taken with proposition~\ref{prop:two}, the conditional distribution of $\vw_i|\vW_{Q}$ given any subset of columns $Q$ is equal in distribution to $\vN_{Q}\widetilde{\vw}_i$, where $\vN_{Q}$ is an orthonormal basis for the null space of $\vW_{Q}$ and $\widetilde{\vw}_i|\vW_{Q} \sim \mbox{PN}_{n-|Q|+1}(\vec{0}, \vN'_{Q}\vOmega_i\vN_{Q})$.
Therefore, we now focus on a prior distribution for $\widetilde{\vw}_i$, the \textit{projected weight function}, where $\widetilde{\vw}_i|\vW_{-i} \sim \mbox{PN}_{n-k+1}(\vec{0}, \vN'_{i}\vOmega_i\vN_{i})$ (i.e., $Q = \{1, \ldots, i-1, i+1, \ldots, k\}$) and the columns of $\vN_{i}$ span the null space of $\vW_{-i}$.

\subsection{Projected normal prior distribution}\label{sec:PriorDistribution}

From the construction in Section~\ref{sec:GeneratingMechanism}, we have $\widetilde{\vw}_i|\vW_{-i} \sim \mbox{PN}_{n-k+1}(\vec{0}, \vN_{i}'\vOmega_i\vN_{i})$.
However, sampling from a high-dimensional projected normal distribution is difficult because of the form of the density function.
To make sampling from the projected normal tractable, we augment the distribution $\widetilde{\vw}_i|\vW_{-i}$ using a latent length variable $r_i$.
The joint distribution of $(r_i, \widetilde{\vw}_i) | \vW_{-i}$ can be derived by transforming the random variable $\vx_i$ to spherical coordinates (see supplement~\ref{sec:PND}), where the density function is
\begin{align}\label{eqn:PN_density}
    p(r_i, \widetilde{\vw}_i|\vW_{-i}) = (2 \pi)^{-n^*/2}|\vN_{i}'\vOmega_i\vN_{i}|^{-1/2}\exp\left\{-\frac{1}{2}(r_i \widetilde{\vw}_i)'(\vN_{i}'\vOmega_i\vN_{i})^{-1}(r_i \widetilde{\vw}_i) \right\}r_i^{n^*-1}\mathbb{I}(\widetilde{\vw}_i \in \mathcal{V}_{1,n^*}),
\end{align}
which we denote as $p(r_i, \widetilde{\vw}_i) \sim \mbox{N}_{n^*}(\vec{0}, \vN_{i}'\vOmega_i\vN_{i})r_i^{n^*-1}$ with $n^* = n-k+1$.
The indicator function $\mathbb{I}(\widetilde{\vw}_i \in \mathcal{V}_{1,n^*})$ is an integrating constant that is independent of the angle of $\widetilde{\vw}_i$ and dependent only on its length.
Note for $k=1$, the Stiefel manifold $\mathcal{V}_{1,n}$ is the $n-1$-dimensional unit sphere and $\mathcal{V}_{1,n} \equiv \mathbb{S}^{n-1}$.
The length variable $r_i$ can be sampled using either a slice sampler \citep{Hernandez-Stumpfhauser2017} or a Metropolis-Hastings algorithm.
However, we have found the slice sampler has numerical issues when $n$ is large, and use a Metropolis-Hastings within Gibbs algorithm (see supplement~\ref{sec:FCD}) for all examples presented herein.

The PN prior is convenient because if the data distribution is normal, the resulting full conditional distribution is proportional to a normal, which is easy to sample from (see Section~\ref{sec:PSVD} and supplement~\ref{sec:FCD} for more detail).

\subsection{Incorporating explicit structure into the prior}\label{sec:PriorStructure}

From our formulation of the prior, we have the ability to specify or estimate the correlation structure for the projected basis functions.
The non-informative choice is $\vOmega_i \propto \vI_n$, implying there is no dependence between the elements of the basis functions.
As discussed in the supplement (\ref{sec:IC}), when $\vOmega_i \equiv \vI$ the generating mechanism is equivalent to that proposed by \citet{Hoff2007}, resulting in $\widetilde{\vw}_i$ being distributed uniformly on the $(n-k+1)$-dimensional sphere and the prior being equivalent to \citet{Hoff2007}.

A more general choice is to model $\vOmega_i = \sigma^2_{i}\vC_i$, where $\vC_i$ is a positive-definite correlation matrix that specifies structure among the elements in the $i$th basis function and $\sigma^2_i$ is a common variance parameter for those elements.
(While $\sigma^2_i$ does not impact the distribution of $\widetilde{\vw}_i$ or $\vw_i$ directly because they are of unit length, it does affect the joint distribution (\ref{eqn:PN_density}) of $(r_i, \widetilde{\vw}_i)$.)
In most cases, $\vC_i \equiv \vC(\vtheta_i)$ will depend on hyperparameters $\vtheta_i$ that can either be specified or learned within the hierarchical model.
Across many areas of science, including spatial statistics, machine learning, and emulation of complex physical models, the elements of $\vC_i$ are modeled via kernel functions $C_\theta(\cdot, \cdot)$ that are positive definite on the domain specified by the input space $\mathcal{S}$.
For example, when $\mathcal{S} \subset \mathbb{R}^d$, a popular choice is the Mat\'ern kernel
\begin{align}\label{eqn:matern}
    C_{\nu, \rho}(\vs, \vs') = \frac{2^{1-\nu}}{\Gamma(\nu)}\left(2\nu \frac{||\vs-\vs'||}{\rho}\right)^{\nu}J_{\nu}\left(2\nu \frac{||\vs-\vs'||}{\rho}\right),
\end{align}
defined for $\vs,\vs' \in \mathcal{S}$,
where $\Gamma$ is the gamma function, $J_{\nu}$ is the Bessel function of the second kind, and $\vtheta = \{\nu, \rho\}$ are hyperparameters that describe the differentiability and length-scale of the implied stochastic process, respectively. 
Special cases of the Mat\'ern kernel are for $\nu = 0.5$, in which (\ref{eqn:matern}) simplifies to the exponential kernel $C_{0.5, \rho}(\vs, \vs') = \exp\{-||\vs - \vs'||/\rho\}$, and the limit as $\nu\rightarrow\infty$, in which (\ref{eqn:matern}) reduces to the squared exponential or Gaussian kernel $C_{\infty, \rho}(\vs, \vs') = \exp\{-||\vs - \vs'||^2/\rho\}$.
Kernel functions like the Mat\'ern are useful for modeling generic dependence because they are highly flexible, depend on only a few hyperparameters (each of which is interpretable), yield data-driven smoothing that can characterize nonlinear structures in the underlying data, and require minimal \textit{a priori} or subjective specification.
Furthermore, such kernel functions do not require offline tuning of bandwidth or regularization parameters \citep[as is needed in, e.g., smoothing splines; see][]{Wang2017} since these aspects of the kernel can be inferred from the data within the Bayesian hierarchical model.

\section{General probabilistic model}\label{sec:GPM}

Define $\vZ \in \mathbb{R}^{n \times m}$ to be the observed data which is modeled as
\begin{align}\label{eqn:data}
    \vZ = \vM + \vY + \vA\vXi\vB,
\end{align}
where $\vM \in \mathbb{R}^{n \times m}$, $\vY \in \mathbb{R}^{n \times m}$, $\vSigma = \vA \vA' \in \mathbb{R}^{n \times n}$, $\vPhi = \vB \vB' \in \mathbb{R}^{m \times m}$, and $\vXi = [\vxi_1, \ldots, \vxi_m]$ with $\vxi_i$ independent $N_n(0, \vI_m)$ for $i = 1, \ldots, m$.
Then $\vZ|\vM, \vY, \vSigma, \vPhi \sim \mbox{MN}_{n \times m}(\vM + \vY, \vSigma, \vPhi)$ where $\mbox{MN}$ is the matrix normal distribution, $\vM + \vY$ is the mean of $\vZ$, $\vSigma$ is the covariance matrix for the rows of $\vZ$, $\vPhi$ is the covariance matrix for the columns of $\vZ$, and the density function is
\begin{align}\label{eqn:SVD_M_term}
    p(\vZ|\vM, \vY, \vSigma, \vPhi) = \frac{1}{(2\pi)^{nm/2}|\vPhi|^{n/2}|\vSigma|^{m/2}}\exp\left\{-\frac{1}{2} \text{tr}\left[ \vPhi^{-1}(\vZ - \vM - \vY)'\vSigma^{-1}(\vZ - \vM - \vY) \right] \right\}.
\end{align}
Equation (\ref{eqn:data}) is a mixed-effects model, where $\vM$ is a fixed-effect mean structure that is dependent on observed covariates, which we discuss in Section~\ref{sec:LinearTrend}, and $\vY$ is a ``smooth'' random effect that we will represent using basis functions and weights.
Generally, we assume $\vY$ is a mean zero random effect and explains any discrepancy in $\vZ$ not explained by $\vM$.
For example, if $\vZ$ is oriented such that the rows index spatial locations and the columns index temporal observations (or replications), $\vY$ would be considered spatial random effects.
We now specify a non-parametric model for the random effects $\vY$ using singular value decomposition and the prior distribution proposed in Section~\ref{sec:PriorDistribution}.

\subsection{A probabilistic model for singular value decomposition}\label{sec:PSVD}

For now, we assume the mean of $\vZ$ is zero (i.e., $\vM = \vec{0}$) and focus on a model for $\vY$.
In models such as (\ref{eqn:data}), the process $\vY$ can be represented as a reduced-rank process.
One example of a reduced-rank model is the singular value decomposition (SVD) $\vY = \vU \vD \vV'$, where $\vU \in \mathbb{R}^{n \times l}$ is an orthonormal matrix, $\vD \in \mathbb{R}^{l \times l}$ is a diagonally structured matrix, $\vV \in \mathbb{R}^{m \times l}$ is an orthonormal matrix, and $l = \text{min}\{n, m\}$.
To reduce the dimension of the process, we set $k < l$ (typically $k \ll l$) where $k$ is some pre-specified value.
This results in $\vY \approx \vU \vD \vV'$, where now $\vU \in \mathbb{R}^{n \times k}$, $\vD \in \mathbb{R}^{k \times k}$, and $\vV \in \mathbb{R}^{m \times k}$ are of reduced dimension.
In traditional SVD (similarly in PCA), the amount of variance explained by each component can be used to inform the value of $k$.
For now, we will assume $k$ is fixed and refer the reader to Section~\ref{sec:ModelSpecifications} for further discussion.

In (\ref{eqn:data}), $\vPhi = \vB \vB'$ represents the covariance between replicate observations (columns).
We make the simplifying assumption $\vPhi = \vI_m$ (i.e., independence between replicates) and model all variation in the data through $\vSigma$, which represents the covariance within observations (rows).
The resulting probability model is $\vZ \sim \mbox{MN}_{n \times m}(\vU \vD \vV', \vSigma, \vI_m)$, where $\vSigma$ now accounts for the approximation of choosing $k \ll l$ and the density function is
\begin{align}\label{eqn:SVD_data}
    p(\vZ|\vU, \vD, \vV, \vSigma, \vI_m) = \frac{1}{(2\pi)^{nm/2}|\vSigma|^{m/2}}\exp\left\{-\frac{1}{2} \text{tr}\left[(\vZ - \vU \vD \vV')'\vSigma^{-1}(\vZ - \vU \vD \vV') \right] \right\}.
\end{align}

\subsubsection{Model priors}

To complete our model specification, we assign priors to $\vU, \vD, \vV,$ and $\vSigma$, and estimate the model parameters using Bayesian techniques.
Define $\vU_{-i} \equiv [\vu_1, \ldots, \vu_{i-1}, \vu_{i+1}, \ldots, \vu_k]$, $\vV_{-i} \equiv [\vv_1, \ldots, \vv_{i-1}, \vv_{i+1}, \ldots, \vv_k]$, $\vD_{-i} \equiv \text{diag}(d_1, \ldots, d_{i-1}, d_{i+1}, \ldots, d_{k})$, and $\vE_{-i} \equiv \vZ - \vU_{-i}\vD_{-i}\vV'_{-i}$, so that $\vZ - \vU\vD\vV' = \vE_{-i} - d_i\vu_i\vv'_i$.
Factoring the trace of the exponent of (\ref{eqn:SVD_data}),
\begin{align*}
    \text{tr}[(\vZ - \vU\vD\vV')'\vSigma^{-1}(\vZ - \vU\vD\vV')] & = \text{tr}[(\vE_{-i} - d_i\vu_i\vv'_i)'\vSigma^{-1}(\vE_{-i} - d_i\vu_i\vv'_i)] \\
            & = \text{tr}[\vE'_{-i}\vSigma^{-1}\vE_{-i} - 2d_i\vv_i\vu'_i\vSigma^{-1}\vE_{-i} + d^2_i\vv_i\vu'_i\vSigma^{-1}\vu_i\vv'_i] \\
            & = \text{tr}[\vE'_{-i}\vSigma^{-1}\vE_{-i} - 2d_i\vu'_i\vSigma^{-1}\vE_{-i}\vv_i + d^2_i\vv'_i\vv_i\vu'_i\vSigma^{-1}\vu_i] \\
            & = \text{tr}[\vE'_{-i}\vSigma^{-1}\vE_{-i} - 2d_i\vu'_i\vSigma^{-1}\vE_{-i}\vv_i + d^2_i\vu'_i\vSigma^{-1}\vu_i].
\end{align*}
The distribution $\vZ \sim \mbox{MN}_{n \times m}(\vU \vD \vV', \vSigma, \vI_m)$ can then be written
\begin{align}\label{eqn:SVDfactored}
    p(\vZ|\vu_i, \vv_i, d_i, \vU_{-i}, \vD_{-i}, \vV_{-i}, \vSigma) = & \\ \frac{1}{(2\pi)^{nm/2}|\vSigma|^{m/2}}\exp\biggl\{-\frac{1}{2} \text{tr}\Bigl[&\vE'_{-i}\vSigma^{-1}\vE_{-i} - 2d_i\vu'_i\vSigma^{-1}\vE_{-i}\vv_i + d^2_i\vu'_i\vSigma^{-1}\vu_i\Bigr] \biggr\} \nonumber,
\end{align}
which enables inference on the columns of $\vU$ and $\vV$ and the elements of $\vD$ individually (e.g., inference on $\vu_i$ and $\vv_i$).
Recall from Section~\ref{sec:PriorDistribution} that $\vu_i|\vU_{-i} \overset{d}{=} \vN^u_{i} \widetilde{\vu}_i|\vU_{-i}$ and $\vv_i|\vV_{-i} \overset{d}{=} \vN^v_{i} \widetilde{\vv}_i|\vV_{-i}$ where the columns of $\vN^u_{i}$ and $\vN^v_{i}$ span the null space of $\vU_{-i}$ and $\vV_{-i}$, respectively.
We specify the prior distributions
\begin{align}\label{eqn:SVDpriors}
\begin{split}
    d_{i}\widetilde{\vu}_i|\vU_{-i} & \sim \mbox{N}_{n-k+1}(\vec{0}, \vN^{u \trp}_{i}\vOmega^u_i\vN^u_{i})d_{i}^{n-k}\mathbb{I}(\widetilde{\vu}_i \in \mathcal{V}_{1,n-k+1}) \\
    d_{i}\widetilde{\vv}_i|\vV_{-i} & \sim \mbox{N}_{m-k+1}(\vec{0}, \vN^{v \trp}_{i}\vOmega^v_i\vN^v_{i})d_{i}^{m-k} \mathbb{I}(\widetilde{\vv}_i \in \mathcal{V}_{1,m-k+1}) \\
    d_i & \sim\mbox{Unif}(0, \infty).
\end{split}
\end{align}
For simplicity, we assume $\vSigma = \sigma^2\vI_n$, but this simplification can be relaxed if desired, e.g., by allowing $\vSigma$ to be a structured non-diagonal covariance matrix.
Last, we specify $\vOmega^u_i = \sigma^2_{u,i}\vC_u(\vtheta_{u,i})$ and $\vOmega^v_i = \sigma^2_{v,i}\vC_v(\vtheta_{v,i})$ where $\vC_u(\vtheta_{u,i})$ and $\vC_v(\vtheta_{v,i})$ are valid correlation matrices (i.e., the matrices are positive definite; see Section~\ref{sec:PriorStructure}) and $\sigma_{u,i}^2$ and $\sigma_{v,i}^2$ are variance parameters.
For $\sigma^2, \sigma_{u,i}^2$ and $\sigma_{v,i}^2$ we assign the non-informative half-t prior on the standard deviation as proposed by \citet{Huang2013}; specifically $\sigma \sim \text{\textit{Half-t}}(1, A)$, $\sigma_{u,i} \sim \text{\textit{Half-t}}(1, A_{u,i})$ and $\sigma_{v,i} \sim \text{\textit{Half-t}}(1, A_{v,i})$.

One major benefit of our proposed prior is now realized: the full conditional distribution of $\widetilde{\vu}_i$ and $\widetilde{\vv}_i$ is proportional to a normal distribution (see supplement~\ref{sec:FCD}).
This results in a Gibbs update step for both $\widetilde{\vu}_i$ and $\widetilde{\vv}_i$ within the larger Markov chain Monte Carlo (MCMC) sampling scheme (shown in supplement~\ref{sec:FCD}), with computational benefits coming from known tricks for sampling from the normal distribution (e.g., the Cholesky decomposition).
Additionally, we have the ability to specify, or learn, unique correlation matrices $\vC_u(\vtheta_{u,i})$ and $\vC_v(\vtheta_{v,i})$ for each basis function which, to the best of our knowledge, has not been previously considered.

\subsubsection{Special cases}\label{sec:specialCases}

As discussed in Section~\ref{sec:PriorStructure}, when $\vOmega_i \equiv \vI$ our specified probabilistic model for SVD is equivalent to the fixed-rank SVD model proposed by \citet{Hoff2007}.
Another interesting property is the relationship to the classic algorithmic approach, C-SVD. 
As discussed and shown empirically through simulation in the supplement (\ref{sec:ASVD}), when $\vOmega_i = \vI$ the mean of the full conditional distribution for the basis functions is equivalent to the estimates obtained by C-SVD.

\subsubsection{Model implementation}\label{sec:ModelSpecifications}
The SVD model (\ref{eqn:SVD_data}) has several parameters that need to be specified: the number of basis functions $k$, the correlation matrices $\vC_u(\vtheta_{u,i})$ and $\vC_v(\vtheta_{v,i})$, and any hyperparameters associated with the correlation matrices $\vtheta_{u,i}$ and $\vtheta_{v,i}$.
While in principle the value $k$ can be estimated either informally, e.g., scree plots \citep{Cattell1966}, or formally, e.g., cross-validation \citep{Wold1978} or the variable-rank model proposed by \citet{Hoff2007}, that is not the focus of this work.
Through empirical testing, we have found that if the true $k^*$ is less than the specified $k$, then the last $k-k^*$ basis functions of both $\vU$ and $\vV$ will have posterior credible intervals that cover zero at all, or nearly all, observations implying the basis function is not significant.
Conversely, if the true $k^*$ is greater than the specified $k$, there is little to no impact on the first $k$ basis functions (i.e., the $k$th basis function is not biased to account for the lost information by not estimating the remaining $k^* - k$ basis functions).
In choosing $k$ for the proposed model, an empirical Bayes approach could also be taken.
Specifically, one could compute the C-SVD, compute the cumulative amount of variance explained by the basis functions, and inform the value of $k$ based on this ``traditional'' approach.

Regarding the correlation matrices $\vC_{u,i}$ and $\vC_{v,i}$, as previously mentioned the hyperparameters $\vtheta_{u,i}$ and $\vtheta_{v,i}$ can either be specified directly or learned within the broader hierarchical model.
The latter choice would involve specifying a prior $p(\vtheta_{u,i}, \vtheta_{v,i})$ for these quantities and subsequently updating them within the MCMC algorithm.
In the case of using the Mat\'ern kernel to specify $\vC_u(\vtheta_{u,i})$ and $\vC_v(\vtheta_{v,i})$, recall that $\vtheta_{u,i} = \{\nu_{u,i}, \rho_{u,i}\}$ and $\vtheta_{v,i} = \{\nu_{v,i}, \rho_{v,i}\}$, where $\nu_{(\cdot)}$ describes the differentiability of the implied stochastic process and $\rho_{(\cdot)}$ describes the length-scale of the basis functions.
We generally recommend setting $\nu_{(\cdot)} = 3.5$ so the basis functions are third-order continuous but not over- or under- smoothed (e.g., infinitely differentiable with $\nu = \infty$ or non-differentiable with $\nu = 0.5$, respectively).
If the length-scale parameters are not estimated within the MCMC algorithm, they could be estimated offline via geostatistical techniques, e.g., estimating a semivariogram separately across both the rows and columns.
In the simulations presented in Section~\ref{sec:dataExamples} and for the application in Section~\ref{sec:SAT} we opt to estimate the length-scale parameters within the MCMC algorithm.

\subsection{Other modeling choices}\label{sec:submodels}

Section~\ref{sec:PSVD} proposes a general model for observed data using a low-rank approach.
However, there are other model specifications and corresponding matrix factorizations 
that can be seen as special cases of the SVD model. 
We discuss a few of these choices.

\subsubsection{Principal components}\label{sec:PC}

As discussed in the introduction, PCA and SVD can be shown to produce an equivalent matrix factorization.
To this end, we can analogously represent the process $\vY = \vU \vA$ where $\vU$ is an orthonormal matrix of the eigenvectors of $\vY \vY'$, also known as the \textit{principal components}, $\vA = \vD\vV' = [\va_1, \ldots, \va_k]$ where $\va_i \sim N(0, \lambda_i\vI_m)$, and $\vLambda = diag(\lambda_1, \ldots, \lambda_k)$ are the eigenvalues of $\vY \vY'$, also known as the \textit{principal loadings}.
To estimate $\vU, \vA,$ and $\vLambda$ under this parameterization, there are two choices: (1) factor $\vE_{-i} = \vZ - \vU_{-i}\vA_{-i}$ in (\ref{eqn:SVDfactored}) and we assign the prior $\lambda_{i}\widetilde{\vu}_i|\vU_{i} \sim \mbox{N}_{n-k+1}(\vec{0}, \vN^{u \trp}_{i}\vOmega^u_i\vN^u_{i})\lambda_{i}^{n-1}$, or (2) estimate the parameters from the SVD model and compute $\vA$ as the posterior product of $\vD$ and $\vV'$.
For choice (1), only the columns of $\vU$ are dependent where the elements of $\vA$ are independent, resulting in only the principal components having dependence.
If choice (2) is taken, then the columns of $\vU$ and rows $\vA$ can be modeled dependently, where $\vA$ is dependent through the specification of $\vV$.
For PCA parameterization, we advocate for choice (2) as there is more control over the model than choice (1).

\subsubsection{Including covariates}\label{sec:LinearTrend}

The general model (\ref{eqn:data}) allows for more complex model structure, such as including covariates.
Traditionally, data are centered, or de-trended, prior to computing the SVD/PCA decomposition.
However, within (\ref{eqn:data}) a mean term can be accommodated by modeling $\vM$.
We first consider a linear model for $\vM$, $\text{vec}(\vM) = \vX \vbeta$, where $\vX \in \mathbb{R}^{nm \times p}$ is a matrix of observed covariates and $\vbeta \in \mathbb{R}^{p}$ is a vector of unknown parameters.
To estimate $\vU, \vD, \vV$ under this parameterization, $\vE_{-i} = \vZ - [\vX \vbeta] - \vU_{-i}\vD_{-i}\vV'_{-i}$ in (\ref{eqn:SVDfactored}), where $[\vX \vbeta]$ denotes the reconstructed matrix of size $n \times m$.
To estimate $\vbeta$, we vectorize the model to get $\text{vec}(\vZ) \sim MVN_{nm}(\vX \vbeta + \text{vec}(\vU\vD\vV'), \vI_m \otimes \vSigma)$, assign the diffuse normal prior $\vbeta \sim MVN_p(\vec{0}, \sigma^2_{\beta}\vI_p)$, with $\sigma^2_{\beta}$ large, and get a standard normal-normal conjugate update for $\vbeta$.

This idea can be extended to a nonlinear function, say $\text{vec}(\vM) = f(\vX, \vbeta)$, where $f()$ is a nonlinear function.
For example, generalized additive models \citep{Hastie2017} or differential equations \citep{Berliner1996, Wikle2003a} could be used to model the nonlinear function.
However, care will likely need to be taken for the nonlinear case such that the nonlinear function is not too flexible, 
thereby conflicting with the random effect (e.g., see \ref{sec:covs}).

\section{Synthetic data examples}\label{sec:dataExamples}

We conduct three simulation studies to illustrate various aspects of the prior.
The first simulation provides justification for basis function-specific structure as opposed to a shared structure for all the basis functions.
The second illustrates how measurement error and model rank impact basis function recovery.
The last simulation investigates the ability to recover covariates when there may be confounding between the fixed and random effects.

\subsection{Data generation}\label{sec:dataGeneration}

For all simulations, the target ``true'' basis functions $\vU$ and $\vV$ are simulated according to the generating mechanism described in section~\ref{sec:GeneratingMechanism} (e.g., to produce the orthonormal matrix in (\ref{eqn:orthomatrix})) with $\vOmega^u_i = \vC_u(\vtheta_{u,i})$ and $\vOmega^v_i = \vC_v(\vtheta_{v,i})$ where the elements of $\vC_u(\vtheta_{u,i})$ and $\vC_v(\vtheta_{v,i})$ are defined by the Mat\'ern correlation function with $\vtheta_{u,i} = (\nu_{u,i}, \rho_{u,i})$ and $\vtheta_{v,i} = (\nu_{v,i}, \rho_{v,i})$.
Data is simulated according to $Z(x,t) \sim N(M(x,t) + Y(x,t), \sigma^2)$ with $x = x_1, \ldots, x_n$ equally spaced in $\mathcal{X} = [-5, 5]$, $t = t_1, \ldots, t_m$ equally spaced in $\mathcal{T} = [0, 10]$, $n = 100$, $m = 100$, and $Y(x,t)$ being the $(x,t)$ element of the matrix $\vY = \vU \vD \vV'$.
The specification of $\vM=[M(x,t)]_{(x,t)\in\mathcal{X}\times\mathcal{T}}$ is described in each of the following subsections.
The value of $\sigma^2$ is chosen to match a target signal-to-noise ratio (SNR): let $\veta$ be a random $n \times m$ matrix of iid standard normal random variables, then, $\sigma = \sqrt{\frac{var(\vM + \vY)}{\text{SNR} * var(\veta)}}$. Ultimately, the simulated data is $\vZ = \vM + \vY + \sigma \veta$ (see the supplement Figure~\ref{fig:simdata1D}) with $var(\vM + \vY)/var(\vZ-\vM-\vY) = \text{SNR}$.

\subsection{Synthetic example \#1: basis function-specific length scales}\label{sec:variablelength}

The first simulation study assesses how our model recovers the underlying basis functions when the true basis functions have differing length-scales.
We compare our ``variable model,'' in which we allow each basis function to have unique structure that is estimated from the data, to a ``grouped model,'' in which all basis functions have a shared structure that is also estimated from the data.
A distinguishing feature of our methodology is that we can model basis function-specific structure, in comparison to other recent work \citep{Pourzanjani2021, Jauch2021} wherein all basis functions have the same length-scales. 
Both models are described in Section \ref{sec:PSVD}: in the variable model, $\rho_{\cdot,i}$ and $\rho_{\cdot,j}$ need not be equal, while in the grouped model, we impose the restriction that $\rho_{\cdot,i} = \rho_{\cdot,j}$, for $i ,j = 1, \ldots, k$.
The grouped model is a special case of the variable model, illustrating the enhanced flexibility of our methodology relative to existing approaches.

To explore the effect of basis function-specific structure, we generate data where the length-scale for each basis function varies from larger to smaller in an exponentially decreasing trend similar to what is shown in Figure \ref{fig:lengthScaleMotivation}.
To determine the effect of measurement error in conjunction with varying basis function length-scale, we generate data sets with $\sigma^2$ chosen such that 
$\text{SNR} = [10, 5, 2, 1, 0.5, 0.1]$.
For this simulation study, we do not consider the effect of $\vM$, and all data is simulated with $\vM \equiv 0$.
For all data generation, we specify the true number of basis functions $k = 4$ with covariance parameters $\nu_{(\cdot),i} = 3.5$ for $i=1,\dots,k$ and $\vrho_{(\cdot)} = (3.5, 1, 0.5, 0.25)$ for both $\vU$ and $\vV$, and diagonal matrix $\vD = \mbox{diag}(40, 30, 20, 10)$.
For each SNR, we obtain 10000 posterior samples of the model parameters and discard the first 5000 as burn-in for both the variable and grouped model.
The process is repeated 100 times for each SNR to help understand the variability in the results.

\begin{figure}[!t]
    \centering
    \includegraphics[width = \linewidth]{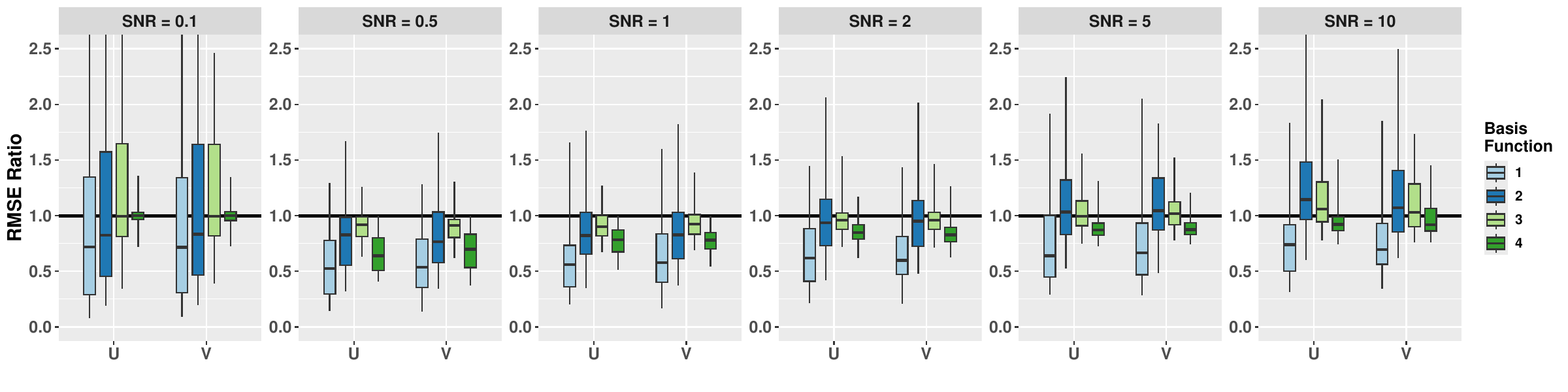}
    \caption{Box plots of the ratio of the RMSE for the variable model divided by the grouped model for $\vU$ (left) and $\vV$ (right) stratified by SNR (sub-panel) and basis function (color) with a horizontal line at 1. For each box, the lower and upper hinge are the 25th and 75th percentiles, respectively, the line within the box is the median, and the lower and upper whiskers are 2.5 and 97.5 percentiles. Note, we have limited the y-axis to ease visual comparison between panels and only the first panel, with a SNR = 0.1, has values outside of the range.}
    \label{fig:VariableSimulationResults}
\end{figure}

For each simulation and model, we calculate the element-wise average root mean squared error (RMSE) of the posterior mean for each basis function in $\vU$ and $\vV$ compared to their corresponding true value.
To compare the RMSE estimates of the variable to grouped model, 
Figure~\ref{fig:VariableSimulationResults} shows the ratio of the RMSE estimate for the variable model over the group model for $\vU$ (left) and $\vV$ (right) stratified by the SNR (sub-panels) and by the basis function (color) along with a horizontal reference line at one.

RMSE ratios less than one favor the variable model.
From the figure, we see basis functions 2 and 3 for both $\vU$ and $\vV$ have ratios closest to 1 for all values of SNR.
In contrast, basis functions 1 and 4 for both $\vU$ and $\vV$ have ratios that are systematically less than 1 for all values of SNR except 0.1.
The reason the variable model has improved RMSE performance for 1 and 4 is because the estimate for $\rho$ for the grouped model is pulled toward the average length-scale value, which is close to the true length scale for basis functions 2 and 3.
This bias results in the grouped model over-fitting basis function 1 (since the pooled estimate of the length scale is less than the true length scale) and under-fitting basis function 4 (since the pooled estimate of the length scale is larger than the true length scale); see estimates in Figure \ref{fig:variableLengthPlot} for a visual example of the over- and under-fitting.

In summary, our first synthetic example verifies that when the data have differing structures in the underlying basis functions, failing to account for those different structures results in systematically larger errors in the basis function estimates. The true structures can only be appropriately captured when the underlying statistical model directly accounts for basis function-specific structure.

\subsection{Synthetic example \#2: model rank}\label{sec:rankSimulation}

We now conduct a simulation study to illustrate the impact of SNR and model rank $k$ on basis function recovery.
To determine the effect of measurement error, we again generate data sets with $\sigma^2$ chosen such that $\text{SNR} = [10, 5, 2, 1, 0.5, 0.1]$.
As with the previous simulation study, all data is simulated with $\vM \equiv 0$.
For all data generation, we set the true number of basis functions $k^* = 5$ with covariance parameters $(\nu_{(\cdot),i}, \rho_{(\cdot),i}) = (3.5, 3)$ for both $\vU$ and $\vV$ and for all $i=1,\dots,k^*$, and diagonal matrix $\vD = \mbox{diag}(40, 30, 20, 10, 5)$.
One realization of the simulated data with  $\mbox{SNR}=1$ and the $\vU$ and $\vV$ basis functions are shown in Figure~\ref{fig:data1D} in the supplement.

As discussed in Section~\ref{sec:ModelSpecifications}, using this model only requires specification of $k$, the number of basis functions used in $\vU$ and $\vV$, and kernels for $\vC_{u}(\vtheta)$ and $\vC_{v}(\vtheta)$.
To investigate how possible mis-specification of the number of basis functions impacts model recovery, we estimate the model with $k = [3, 4, 5, 6, 7]$ for each level of SNR.
Additionally, we specify a Mat\'ern kernel with smoothness parameter $\rho = 3$ for the correlation structure for all basis functions.
For each SNR and $k$ combination, we obtain $10000$ posterior samples of the model parameters, discarding the first $5000$ as burn-in.
We repeat this process 100 times.

For each posterior simulation, we calculate the 95\% coverage rate (CR) and RMSE for $\vU$, $\vV$, and the ``true'' surface $\vY = \vU \vD \vV'$.
If the true $k^*$ is greater than the specified $k$, the empirical CR and RMSE are computed only for the first $k$ basis functions and then averaged over the $k$ estimates (e.g., we do not consider the last $k^* - k$ basis functions when computing CR and RMSE).
If the true $k^*$ is less than the specified $k$, the empirical CR and RMSE for the ``extra'' $k-k^*$ basis functions are compared to the zero line and the reported CR and RMSE values are obtained by averaging over the $k$ estimates.
Additionally, for each simulation we computed the C-SVD using the base linear algebra library, \textit{LinearAlgebra.jl}, in Julia \citep[][]{Bezanson2017} and computed the RMSE of the calculated $\vU, \vV$, and reconstructed surface $\vY$ assuming the same truncation value $k$.
The coverage rates and the RMSE are shown in Figure~\ref{fig:simulation_CR_RMSE}.
The results of one simulation are shown in Figure~\ref{fig:estimatedBasis1D} based on the data shown in Figure~\ref{fig:data1D} in the supplement.

\begin{figure}
    (a) Coverage rate, aggregated across repeated samples
    
    \includegraphics[width = 0.95\textwidth]{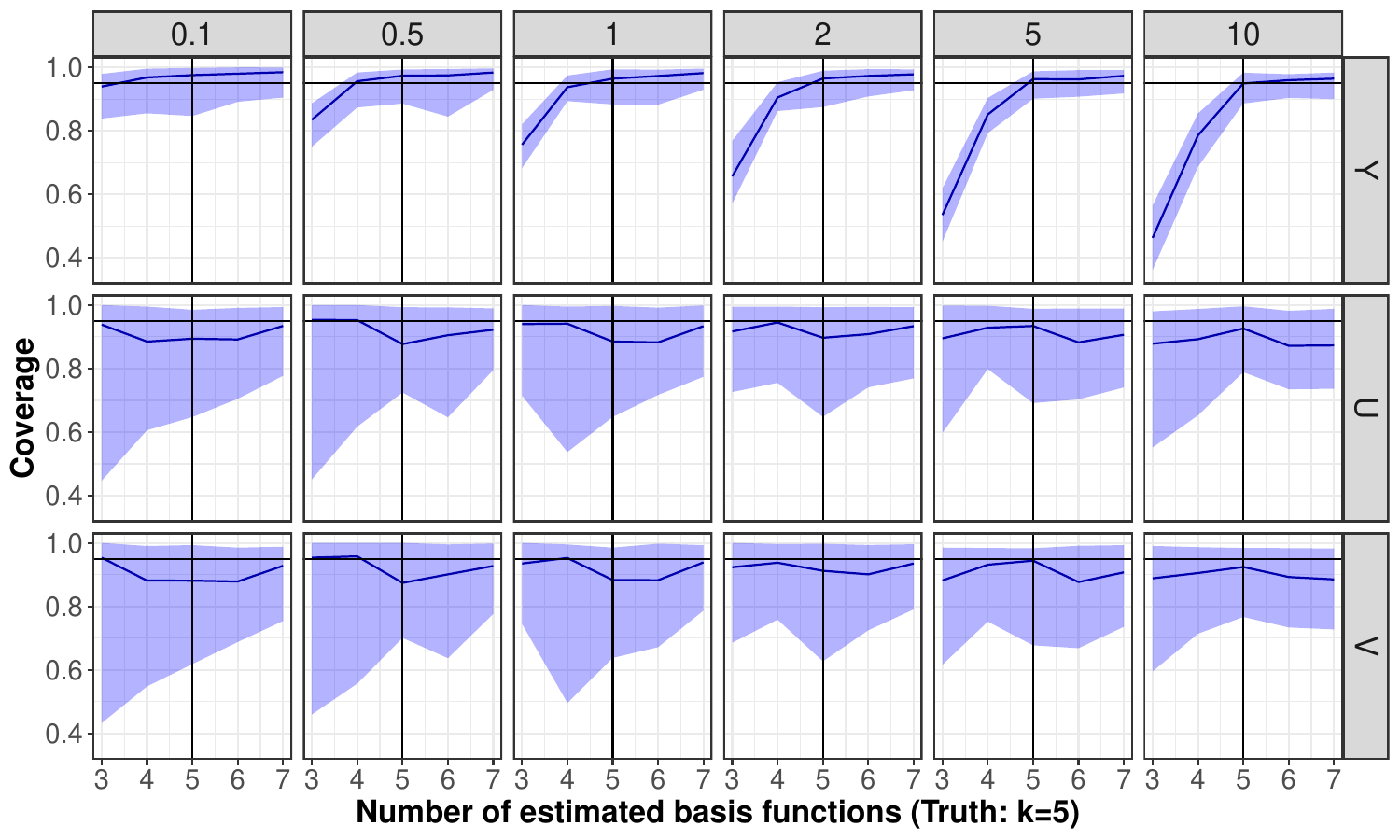}
    
    (b) Root mean square error, aggregated across repeated samples
    
    \includegraphics[width = 0.95\textwidth]{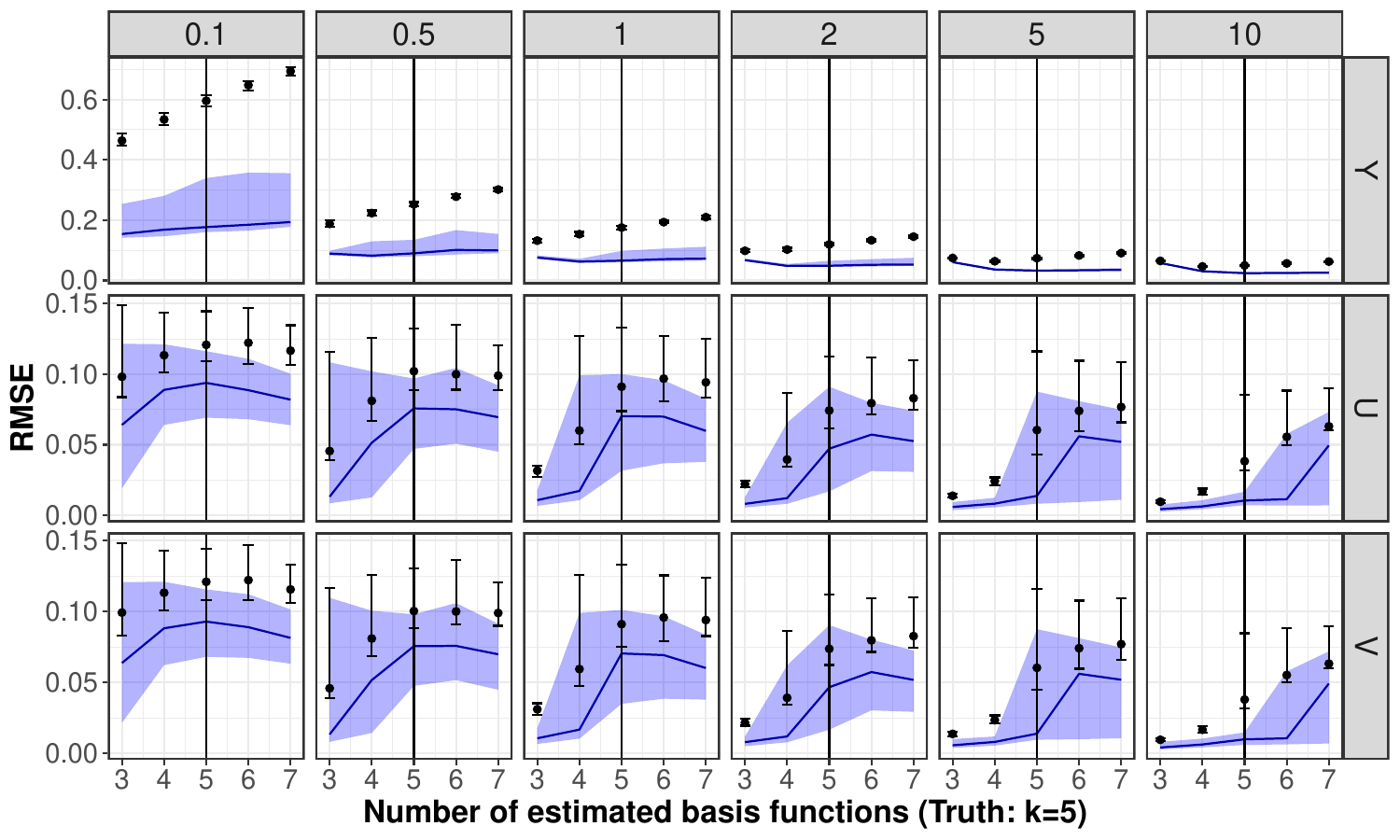}

    \caption{\small{Validation results from the synthetic data example, showing coverage rate (top) and root mean square error (bottom). In each panel, the solid blue line is the median Monte Carlo coverage rate and shaded regions are the 95\% Monte Carlo uncertainty bounds for the coverage rate over synthetic replicates. Results are shown for varying levels of SNR and values of $k$ for the recovered surface $\vY$ (top), $\vU$ basis functions (middle), and $\vV$ basis functions (bottom). The SNR values range from 0.1 (left) to 10 (right). The black vertical line indicates the true value $k^*=5$  and the horizontal black line for (a) is at 95\% (the nominal coverage rate). In panel (b), the black point and error bars show the median and 95\% bootstrapped confidence interval for the RMSE using the algorithmic C-SVD method.}}
    \label{fig:simulation_CR_RMSE}
\end{figure}

\begin{figure}[t]
    \centering
    \includegraphics[width = \textwidth]{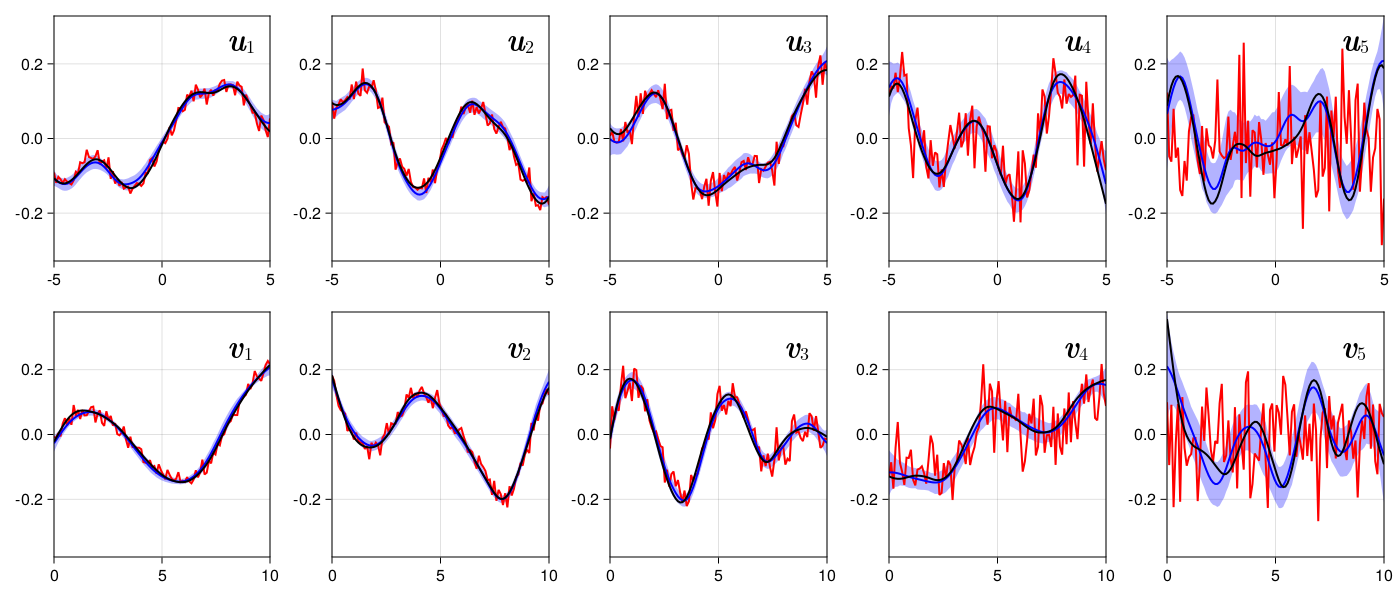}
    \caption{Posterior mean (blue line), 95\% credible intervals (shaded blue region), truth (black line), and C-SVD estimate (red line) for the $\vU$ and $\vV$ basis functions from a random simulation. The data associated with this random simulation is shown in Figure~\ref{fig:data1D}.}
    \label{fig:estimatedBasis1D}
\end{figure}

From Figure~\ref{fig:simulation_CR_RMSE}(a), we see our median coverage rate for the $\vU$ (middle row) and $\vV$ (bottom row) basis functions (blue line) is near the nominal level (horizontal black line) and the 95\% Monte Carlo uncertainty bounds (MCUB) for the coverage rate (blue shaded region) covers the nominal level for all SNR levels and regardless of the specification of $k$.
This implies that posterior uncertainties are well calibrated and robust to mis-specifications of the number of estimated basis functions, regardless of the magnitude of the noise.
For the recovered data (top row), we see the 95\% MCUB cover the nominal level for all SNR levels with $k$ greater than 5.
However, for $k$ less than 5, achieving the nominal coverage depends on SNR: in low signal cases (e.g., SNR = 0.1), the uncertainties are well calibrated, while posterior uncertainties are too small (i.e., coverage of the truth is much less than the nominal level) when the signal is stronger (SNR $>$ 0.5).
This counterintuitive result is due to the impact of unaccounted signal for higher-order basis functions ($i=4$ and/or $i=5$) on the signal: for large SNR, individual basis functions both (a) contribute more to the overall uncertainty in the data and also (b) have narrower posterior distributions, such that ignoring one or more true basis functions causes the model to underestimate data uncertainties (e.g., see Figure~\ref{fig:estimatedBasis1D}).
Conversely, for smaller SNR, there is more uncertainty in each basis function estimate and the impact of higher-order basis functions on the estimated surface is reduced, to the extent that the model can recover the nominal coverage of the data.

For the RMSE, shown in Figure~\ref{fig:simulation_CR_RMSE}(b), the most notable result is that the median RMSE for our approach (blue line) is systematically lower than the corresponding RMSE from the algorithmic C-SVD approach for both data (top row) and basis functions (middle and bottom rows), across SNR levels and specification of $k$.
In other words, estimates of the basis functions in both $\vU$ and $\vV$ and the recovered data have systematically lower errors than what one can obtain from the algorithmic approach. 
Regarding RMSE for estimates of the recovered surface, the median error (blue line) decreases as a function of SNR, as expected, and interestingly the data RMSE is relatively insensitive to specification of $k$. 
For the $\vU$ and $\vV$ basis functions (middle and bottom rows, respectively, of Figure~\ref{fig:simulation_CR_RMSE}), we see that trajectories of RMSE estimates for our proposed approach and the C-SVD mirror each other, with our estimates being systematically, but not significantly, lower.
However, across SNR levels, the RMSE actually increases as one moves from $k=3$ to $k=7$ (even though the true $k^*=5$).
For SNR equal to 5 and 10, we see a dramatic spike in the RMSE estimate and uncertainty for the $\vU$ and $\vV$ basis functions for $k = 6$ and $7$.
This is because we are comparing against the zero line for these cases: while the uncertainty bounds for these basis function covers the zero line (as seen in the coverage results in Figure~\ref{fig:simulation_CR_RMSE}a.), there is a lot of variability in these estimates (with relatively lower uncertainty due to larger signal), leading to inflated RMSE values.

In conclusion, this synthetic data example shows the proposed method has well calibrated uncertainty and significantly reduces the impact of measurement noise on the basis function estimates.
However, there is a significant trade-off in choosing $k$ to be too small or large based on the magnitude of the SNR.
Based on our simulation, there will be significant bias in the recovered surface but \textit{not} in the estimated basis functions if $k$ is too small and the SNR is low.
Additionally, there will \textit{not} be significant bias in the recovered surface or in the estimated basis functions if either $k$ is too small and the SNR is large or $k$ is too large.
The only trade-off for $k$ too large is inflated RMSE's for the extraneous basis functions, which could lead to underestimated RMSE's in the recovered surface.
Therefore, we suggest erring on the side of choosing $k$ to be too large.

\subsection{Synthetic example \#3: covariates}\label{sec:covs}

To illustrate how covariates impact the estimation of the basis functions, we now include the fixed effect $\vM$ when simulating data and specify the SNR to be 2.
We consider three different cases of the model for $\vM$: (\ref{M1}) independent fixed and random effects, (\ref{M2}) strongly confounded spatial and temporal fixed and random effects, and (\ref{M3}) weakly confounded spatial and temporal fixed and random effects.
For all three models, we specify $\text{vec}(\vM) = \vX \vbeta$ where $\vbeta = (\beta_1, \ldots, \beta_4) = (-2, 0.6, 1.2, -0.9)$ and $\vX$ is a $nm$ by $4$ matrix.
For each model, the covariates are generated as:
\begin{enumerate}[label=$\mathrm{M}\arabic*$]
    \item\label{M1} - Each element of $\vX$ is i.i.d. $N(0, 0.2^2)$.
    
    \item\label{M2} - Let $\widetilde{\vx}_{1,s}, \widetilde{\vx}_{2,s} \sim N_n(\vec{0}, \vSigma_{s})$, $\widetilde{\vx}_{t} \sim N_m(\vec{0}, \vSigma_{t})$, and $\vx_{st} \sim \mbox{N}_{nm}(\vec{0}, \vSigma_{st})$ where $\vSigma_{s}, \vSigma_{t}$, and $\vSigma_{st}$ are correlation matrices specified using the Mat\'ern kernel with smoothness parameter $\nu = 3.5$ and length-scale parameter $\rho = 3,3$ and $1$, respectively, which is equal to the length-scale of the spatial and temporal random effect, respectively. Then, $\vX = [\vx_{1,s}, \vx_{2,s}, \vx_{t}, \vx_{st}]$ is a $nm \times 4$ matrix where $\vx_{1,s} = \vI_m \otimes \widetilde{\vx}_{1,s}, \vx_{2,s} = \vI_m \otimes \widetilde{\vx}_{2,s}$, and $\vx_{t} = \widetilde{\vx}_{t} \otimes \vI_n$.
    
    \item\label{M3} - The covariate matrix is created in the same manner as in M2 except the length-scale of $\vSigma_{s}, \vSigma_{t}$, and $\vSigma_{st}$ are $\rho = 0.3, 0.3$ and $1$, respectively.
\end{enumerate}

For each covariate specification M1--M3, we implement our methodology with $k=5$, a Mat\'ern kernel with smoothness parameter $\nu = 3$ for the correlation structure for all basis functions, and a diffuse normal prior, $\mbox{N}(0, 10^2)$, on each element of $\vbeta$.
We obtain 10000 posterior samples of the model parameters, discarding the first 5000 as burn-in.
Posterior summaries of the regression coefficients are shown for each model in Table~\ref{tab:indepbeta}.
From the table, we see only $\beta_3$ from \ref{M2} has a credible interval that does not cover the true value, indicating the model is able to reasonably recover the fixed effects under all three scenarios.
To determine the model's ability to correctly recover the random effect, we computed the point-wise 95\% posterior coverage rate for the random effect $\vY = \vU \vD \vV'$ for each \ref{M1}, \ref{M2}, and \ref{M3}, which are 0.965, 0.276, and 0.984, respectively.
Therefore, when the fixed and random effect are independent or they have different spatial and temporal frequencies (weakly confounded), the model is able to correctly identify both model components.
When the fixed and random effects have similar, or in this example equal, spatial and temporal frequencies, the model is unable to properly capture the random effect but can still capture the fixed effect.

\begin{table}[!t]
    \centering
    \begin{tabular}{c|l||cccc}
         model & & $\beta_1$ & $\beta_2$ & $\beta_3$ & $\beta_4$ \\
        \hline
               & true     & -2     & 0.6   & 1.2   & -0.9   \\
        \hline
                  & mean     & -2.032 & 0.632 & 1.204 & -0.873 \\
         \ref{M1} & lower CI & -2.082 & 0.583 & 1.156 & -0.921 \\
                  & upper CI & -1.983 & 0.680 & 1.252 & -0.824 \\
         \hline
                  & mean     & -1.995 & 0.636 & 1.071 & -0.875 \\
         \ref{M2} & lower CI & -2.036 & 0.531 & 1.014 & -0.913 \\
                  & upper CI & -1.943 & 0.747 & 1.118 & -0.807 \\
        \hline
                  & mean     & -2.005 & 0.601 & 1.194 & -0.908 \\
         \ref{M3} & lower CI & -2.016 & 0.589 & 1.179 & -0.938 \\
                  & upper CI & -1.994 & 0.614 & 1.209 & -0.882 \\
    \end{tabular}
    \caption{Posterior mean (top row), lower 95\% credible interval (middle row), and upper 95\% credible interval (bottom row) for the regression coefficients of models M1--M3 (top-bottom).}
    \label{tab:indepbeta}
\end{table}

Based on previous work by \citet{Paciorek2010} discussing the issue of scale with spatial mixed-effects models, our results are not surprising.
Specifically, if the fixed and random effects operate on different scales (either spatially or temporally), \citet{Paciorek2010} rigorously argues the fixed and random effects are identifiable.
If they operate on similar (or equivalent) scales, they are not identifiable.
If interpretation of the random effect is not important, the random effect can restricted to be orthogonal to the fixed effect, thereby making the random effect identifiable on the space orthogonal to the fixed effect \citep{Reich2006, Hodges2010, Hanks2015a}.
However, there has been debate as to the validity of modeling the random effect on the restricted space \citep{Zimmerman2022}.
Because this is not the main goal of the paper, for now we simply recommend being cognizant of these issues.

\section{Surface air temperature}\label{sec:SAT}

As discussed in the introduction, empirical orthogonal functions, or EOFs, are commonly used in climate sciences to summarize modes of variability in atmospheric systems.
Typically, external factors that could be driving the system are referred to as \textit{climate forcings} and modeled as fixed effects, while ``unforced'' year-to-year variability is modeled as a spatial, temporal, or spatio-temporal random effect and referred to as \textit{internal variability}. 
Importantly, when EOF analysis is applied to climate data where the long term trends have been removed, this can be considered a method for characterizing the internal variability of the system.
Particularly for extreme temperature events, EOFs are an important tool for understanding how internal variability combines with long-term trends to produce short-term events that have a large impact on human systems \citep{Grotjahn2016}.
Historically, estimates of the internal variability are derived from ensembles of climate models and rarely computed from observational data products.
Here, we explore our ability to estimate the internal variability of monthly maximum two-meter air temperature in the Pacific Northwest, where it is important to account for spatial and temporal structures in the extreme measurements \citep[again see, e.g.,][]{Grotjahn2016}. 
Such estimates are important for understanding the statistics of monthly maximum temperatures in this region, particularly in light of the recent devastating heatwave that impacted this region in the summer of 2021 \citep{BercosHickey2022}.

We use gridded monthly maximum two-meter air temperature data (tXx) by extracting the largest daily maximum two-meter air temperature each month from the ERA5 reanalysis dataset \citep{Hersbach2020} at 0.25$^\circ$ horizontal resolution from January 1979 to December 2021.
The data are centered by subtracting off the global mean.  
We focus on the subset of data from 44$^\circ$- 53$^\circ$N and 116$^\circ$- 128$^\circ$W, for a total of 1813 spatial locations across 516 time points.
While it is possible to include relevant covariates for this analysis (e.g., greenhouse gas emissions, the El Ni\~no/Southern Oscillation, urbanization, and drought conditions) using a model for $\vM$ (e.g. Section \ref{sec:LinearTrend}), this would have resulted in a substantial number of parameters to estimate and is not the main focus of this work.
Therefore, we opt instead to focus on the model for the random effect and simply centered the data \textit{a priori} to parameter estimation.

As discussed in the introduction, Figure~\ref{fig:lengthScaleMotivation} shows empirical evidence that the basis functions resulting from a SVD of tXx may have different structure. 
We proceed with this assumption.
Therefore, we parameterize the covariance matrix for the prior of the spatial basis functions using the Mat\'ern kernel with smoothness $\nu = 3.5$ and the covariance matrix for the prior of the temporal basis functions using the Gaussian kernel. 
The effective range for both the spatial and temporal basis functions are estimated along with other model parameters.
We specify $k=10$ based on the first 10 basis functions explaining approximately 99\% of the variance as determined from the C-SVD decomposition.
We obtain $10000$ samples from the posterior, discarding the first $5000$ as burn-in, where convergence is assessed graphically with no issues detected.

Posterior summaries of three spatial basis functions (2, 5, and 7), three temporal basis functions (2, 5, and 7), and all length-scale estimates are shown in Figure~\ref{fig:T2Msummary} A), B), and C), respectively.
We highlight basis function 2 because it has little to no significant difference between C-SVD estimate, and 5 and 7 because they contain many spatial and temporal locations with significant differences.
Panel a) depicts summaries of three spatial basis functions $\vu_2$ (top), $\vu_5$ (middle), and $\vu_7$ (bottom), where the left column are the estimates from C-SVD, the middle column are the posterior means from our proposed model, and the right column are the posterior difference between the posterior mean and the algorithmic estimate where locations whose 95\% credible interval does not cover zero are denoted with an `x'.
Panel b) contain estimates of three temporal basis functions $\vu_2$ (top), $\vu_5$ (middle), and $\vu_7$ (bottom), where the black line is the C-SVD estimates, blue line is the posterior mean from our proposed model, and blue shaded region are the 95\% CIs where a vertical line denotes the 95\% CI does not cover the C-SVD estimate.
The last panel, c), are posterior mean estimates of the length-scale parameter (dot) and 95\% credible intervals (error bars) of the correlation kernel for each spatial (left) and temporal (right) basis functions, where blue estimates correspond to the selected basis functions for panels a) and b).
Posterior summaries of all 10 spatial and temporal basis functions are included in the supplement.

\begin{figure}[!t]
    \centering
    \includegraphics[width = \linewidth]{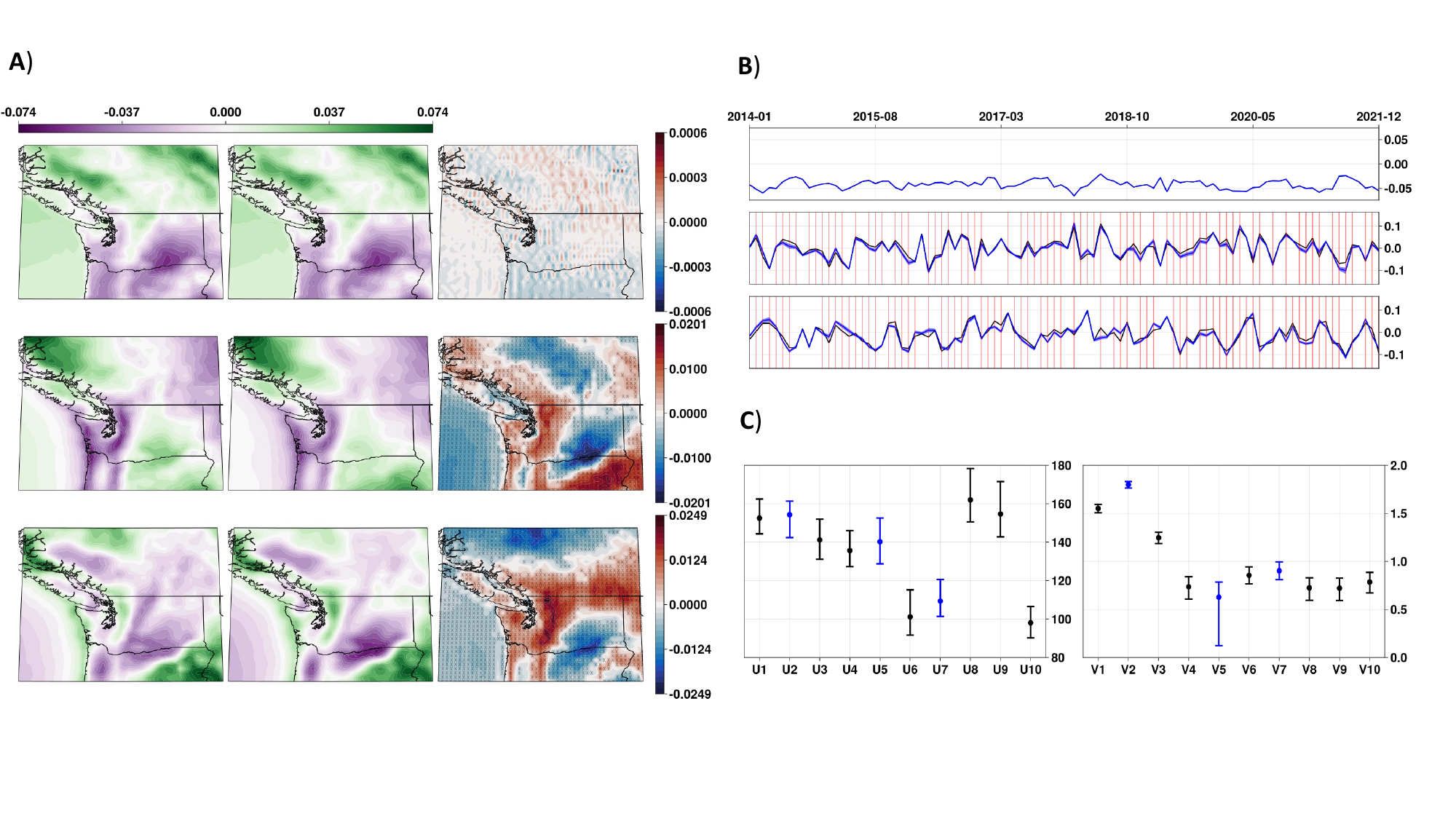}
    \caption{\textbf{a)} Estimated spatial basis functions $\vu_2$ (top), $\vu_5$ (middle), and $\vu_7$ (bottom). The left column are the estimates from C-SVD, the middle column are the posterior means, and the right column are the posterior difference between the posterior mean and the algorithmic estimate where locations whose 95\% credible interval does not cover zero are denoted with an `x'.
    \textbf{b)} Estimated temporal basis functions $\vv_2$ (top), $\vv_5$ (middle), and $\vv_7$ (bottom). For each panel, the black line are the estimates from C-SVD, blue line are the posterior means, and blue shaded region are the 95\% CIs where a vertical line denotes the 95\% CI does not cover the C-SVD estimate.
    \textbf{c)} Posterior mean estimate of the length-scale parameter (dot) and 95\% credible intervals (error bars) of the correlation matrix for each spatial (left) and temporal (right) basis functions. Estimates in blue correspond to the selected basis functions for panels \textbf{a)} and \textbf{b)}.}
    \label{fig:T2Msummary}
\end{figure}

Comparing the spatial plots of the posterior mean to the deterministic counterpart (Figure \ref{fig:T2Msummary}A), the posterior estimates are much smoother spatially and for the fifth and seventh basis functions, the estimates are significantly different over much of the spatial region.
The estimates, both deterministic and probabilistic alike, have an interpretation that makes sense physically.
The second basis function (top row) has a clear land-sea contrast and distinguishes between the plains (purple) and mountains (green).
The fifth basis function captures the influence of the low-lying coastal region and foothills of Canadian Rockies (purple) in contrast to the wet/dry regimes in Canada and Oregon/Washington (green).
The seventh combines multiple physical features and aligns well with geographical features such as topography and appears to capture steep gradient contrasts.

Regarding the temporal estimates (Figure \ref{fig:T2Msummary}B), the second basis function (top) does not have any time points with significantly different estimates than the C-SVD counterpart.
However, both the fifth (middle) and seventh (bottom) do have significant differences (denoted with the vertical lines), and we see the posterior means produce smoother estimates than the C-SVD counterparts.

Additionally, the basis functions all have different posterior mean length-scale estimates.
For the spatial basis functions, $\vu_6, \vu_7,$ and $\vu_{10}$ have significantly smaller values than the other six, as determined by the range of the 95\% CI (Figure \ref{fig:T2Msummary}C, left), and for the temporal, the first three have significantly larger values than the other seven, as determined by the range of the 95\% CI (Figure \ref{fig:T2Msummary}C, right).
This shows we are able to capture the spatial and temporal relationship within each basis function and that the spatial and temporal relationship is different across basis functions.

Importantly for climate science, we are able to provide estimates of the internal variability of a system from observational data, in this example monthly maximum two-meter air temperature of daily maxima, by reconstructing the internal variability using posterior estimates of our structured basis functions.
The estimates account for measurement uncertainty, spatial and temporal dependence, and have quantifiable uncertainty.
These estimates can then be used to account for the internal variability of a system and help isolate the extent to which external factors are driving changes to the system.
Additionally, producing ensembles of weather variables like extreme temperature using climate models can be computationally intensive.
However, we can now sample directly from the posterior distribution of the internal variability of extreme monthly temperatures, accounting for the spatial structures innate to the underlying data. 
These posterior samples are analogous to ensembles of the climate system and computationally much cheaper to compute than ensembles of climate model runs.

\section{Discussion}\label{sec:discussion}

We proposed a novel prior distribution for structured orthonormal matrices that is an extension of \citet{Hoff2007}, where the individual basis functions can be modeled dependently.
The prior is based on the projected normal distribution which we augment with a latent length parameter.
When our prior is combined with a normal data model, the resulting full conditional distributions for the basis functions are conjugate, resulting in analytically straightforward MCMC sampling.
We describe how the prior can be used to conduct posterior inference on a general class of probabilistic SVD models and how to extend the proposed model to various other applications.
We discussed various mathematical properties of our probabilistic SVD model (supplement~\ref{sec:IC}) and illustrated its capability through multiple simulation studies.
The model is then used to draw inference on the internal variability of extreme two-meter air temperature, allowing us to quantify space-time structures in a complex climate process.

The synthetic data examples and application presented in Sections~\ref{sec:dataExamples} and \ref{sec:SAT}, respectively, all highlight the model's efficacy on gridded, i.e., uniformly spaced, data.
However, the model is equally well suited for non-uniformly spaced data so long as the spacing is consistent within space and within time.
If the data are not spaced consistently within space and within time, this would constitute a missing data problem, which we plan to explore in future work.
In addition, our model assumes normally distributed errors.
This assumption can be relaxed by, for example, assuming a hierarchical structure and modeling the mixed-effects as a latent process.

Another area for possible extension could explore the concept of regularized basis functions through the posterior mode of the basis functions.
Similar to the Bayesian Lasso \citep{Park2008} or Bayesian Group Lasso spatial data \citep{Hefley2017}, an $\ell_1$ penalty could be imposed by representing a Laplace distribution as a scale mixture of normal distributions.
The addition of a penalty term, especially a penalty that forces values to zero, could produce sparse dependently structured basis functions whose importance within the spatial context is explored by \citet{Wang2017}.

The choice of the number of basis functions, $k$, is the only major subjective choice in our proposed probabilistic SVD model.
While we show the mis-specification of $k$ does not have a negative impact when erring on the side of $k$ being too large, a more flexible model estimating $k$ is attractive.
To estimate $k$, \citet{Hoff2007} proposed a variable-rank model utilizing the so-called spike-and-slab variable selection prior \citep{Mitchell1988}.
However, because of the difference in our prior compared to the prior proposed by \citet{Hoff2007}, incorporating the spike-and-slab prior into our proposed model would require extra theoretical work.
Work focused on estimating the rank $k$ with our framework would produce a very flexible approach for modeling spatio-temporal random effects.

Finally, our proposed prior does have the disadvantage of relying on a column-wise sampling strategy.
Specifically, within each MCMC iteration, there is a required $\mathcal{O}(n^3)$ cost of computing the orthonormal basis for the null-space $\vN_i^u$ and $\vN_i^v$ (see the supplement for more discussion).
The additional flexibility our approach offers comes at the cost of the computational gains from the methods by \citet{Pourzanjani2021} and \citet{Jauch2021}, which propose solutions to this column-wise strategy.

\section{Competing interests}
No competing interest is declared.

\section{Author contributions statement}

J.S.N., M.D.R., and F.J.B. formulated the research question.
J.S.N. wrote the code, conducted synthetic examples, and performed the main analysis.
J.S.N, M.D.R, and F.J.B wrote and reviewed the manuscript.

\section{Acknowledgments}
All code is written in Julia \citep{Bezanson2017} and is available publicly on GitHub at \url{https://github.com/jsnowynorth/BayesianSVD.jl}.
Additionally, the Bayesian SVD model has been developed into a Julia package \textit{BayesianSVD.jl}, which can also be downloaded from GitHub at \url{https://github.com/jsnowynorth/BayesianSVD.jl} 
for easy use.
All data are publicly available at the Copernicus Climate Data Store \url{https://cds.climate.copernicus.eu/cdsapp#!/dataset/reanalysis-era5-single-levels?tab=overview}.

This research was supported by the Director, Office of Science, Office of Biological and Environmental Research of the U.S.\ Department of Energy under Contract No.\ DE-AC02-05CH11231 and by the Regional and Global Model Analysis Program area within the Earth and Environmental Systems Modeling Program.
This research was also supported by the National Science Foundation [OAC-1931363, ACI-1553685] and the National Institute of Food \& Agriculture [COL0-FACT-2019].
The research used resources of the National Energy Research Scientific Computing Center (NERSC), also supported by the Office of Science of the U.S.\ Department of Energy, under Contract No.\ DE-AC02-05CH11231.

This document was prepared as an account of work sponsored by the United States Government. While this document is believed to contain correct information, neither the United States Government nor any agency thereof, nor the Regents of the University of California, nor any of their employees, makes any warranty, express or implied, or assumes any legal responsibility for the accuracy, completeness, or usefulness of any information, apparatus, product, or process disclosed, or represents that its use would not infringe privately owned rights. Reference herein to any specific commercial product, process, or service by its trade name, trademark, manufacturer, or otherwise, does not necessarily constitute or imply its endorsement, recommendation, or favoring by the United States Government or any agency thereof, or the Regents of the University of California. The views and opinions of authors expressed herein do not necessarily state or reflect those of the United States Government or any agency thereof or the Regents of the University of California.

\setstretch{1}
\bibliographystyle{apalike}
\bibliography{referencesClean}

\newpage

\setcounter{section}{0}
\setcounter{equation}{0}
\renewcommand{\thesection}{A}
\renewcommand{\theequation}{A.\arabic{equation}}
\section{Appendix}\label{sec:appendix}

\section{Proofs of propositions}\label{sec:props}

We now prove the propositions describing the properties of the orthogonal matrix constructed in section~\ref{sec:GeneratingMechanism}.

\noindent \rule{\linewidth}{0.1em}

\begin{lemma}\label{lemma:one}
    The generating random variables $\vz_j$ and $\vOmega_j$ are exchangeable. 
\end{lemma}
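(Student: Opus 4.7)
The plan is to observe that exchangeability follows directly from the fact that the pairs $(\vz_j, \vOmega_j)$ for $j = 1, \ldots, k$ are independent and identically distributed. By the construction in Section~\ref{sec:GeneratingMechanism}, each $\vOmega_j$ is drawn independently from the common distribution $\pi_\Omega$, and conditional on $\vOmega_j$, the vector $\vz_j$ is independently drawn from $\mbox{N}_n(\vec{0}, \vOmega_j)$. Hence the joint density of the full collection factors as
\begin{equation*}
    p(\vz_1, \vOmega_1, \ldots, \vz_k, \vOmega_k) \;=\; \prod_{j=1}^k \pi_\Omega(\vOmega_j)\, \phi_n(\vz_j; \vec{0}, \vOmega_j),
\end{equation*}
where $\phi_n(\cdot; \vec{0}, \vOmega)$ is the $n$-variate normal density with mean $\vec{0}$ and covariance $\vOmega$.

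The key step is then to note that this product representation is manifestly invariant under any permutation of the index set $\{1, \ldots, k\}$: for any permutation $\pi$,
\begin{equation*}
    \prod_{j=1}^k \pi_\Omega(\vOmega_{\pi(j)})\, \phi_n(\vz_{\pi(j)}; \vec{0}, \vOmega_{\pi(j)}) \;=\; \prod_{j=1}^k \pi_\Omega(\vOmega_j)\, \phi_n(\vz_j; \vec{0}, \vOmega_j),
\end{equation*}
since reindexing the factors of a product does not change its value. Therefore the joint distribution of $((\vz_1, \vOmega_1), \ldots, (\vz_k, \vOmega_k))$ equals that of $((\vz_{\pi(1)}, \vOmega_{\pi(1)}), \ldots, (\vz_{\pi(k)}, \vOmega_{\pi(k)}))$, which is exactly the definition of exchangeability of the pairs.

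There is no serious obstacle here; the only subtlety worth flagging is that exchangeability must be understood jointly in the pairs $(\vz_j, \vOmega_j)$ rather than in the $\vz_j$ alone, since the $\vz_j$ are not identically distributed marginally once the $\vOmega_j$ are conditioned upon. This joint exchangeability is precisely what is needed in the subsequent proof of Proposition~\ref{prop:one}, where an exchangeability argument on $(\vz_1, \ldots, \vz_k)$ is propagated through the deterministic construction of $\vW$ to yield exchangeability of its columns.
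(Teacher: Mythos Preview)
Your proposal is correct and follows essentially the same approach as the paper: both arguments reduce exchangeability to the observation that i.i.d.\ random elements are trivially exchangeable. The paper's own proof integrates out $\vOmega_j$ to argue that the $\vz_j$ are marginally i.i.d., whereas you work directly with the pairs $(\vz_j,\vOmega_j)$ and show they are jointly i.i.d.; your version is slightly more explicit and is in fact the form invoked later in the proof of Proposition~\ref{prop:one}, where $\vOmega_{\pi(j+1)}$ appears alongside $\vz_{\pi(j+1)}$.
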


\begin{proof}

The generating random variables $\vz_i$ are exchangeable because they all independent and have the same marginal distribution.
Specifically, because $\vOmega_1, \ldots, \vOmega_k \sim \pi_{\Omega}$ all have the same distribution, if we marginalize $\vz_j$, we get $p(\vz_j) = \int_{\Omega} p(\vz_j|\vOmega_j)p(\vOmega_j) d\vOmega$ is the same for all $j = 1, \ldots, k$.
    
\end{proof}

\noindent \rule{\linewidth}{0.1em}

\begin{lemma}\label{lemma:two}
    For any permutation $\pi$ of the columns of the $n \times k$ matrix $\vX$, denoted $\vX_{\pi}$, the matrix $\vP_{\pi} \equiv \vI - \vX_{\pi}(\vX_{\pi}' \vX_{\pi})^{-1} \vX_{\pi}'$ is the unique projection onto the orthogonal complement of column space of $\vX$. That is, $\vP_{\pi} = \vP$.
\end{lemma}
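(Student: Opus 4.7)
The plan is to reduce the statement to a direct algebraic calculation after observing that any column permutation can be represented by right-multiplication with an orthogonal permutation matrix. Specifically, I would write $\vX_{\pi} = \vX\vPi$, where $\vPi$ is the $k \times k$ permutation matrix corresponding to $\pi$. Because $\vPi$ is orthogonal, $\vPi\vPi' = \vPi'\vPi = \vI_k$, and so $\vPi^{-1} = \vPi'$.

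Next I would compute the projector directly. We have $\vX_{\pi}'\vX_{\pi} = \vPi'(\vX'\vX)\vPi$, and since $\vPi$ is nonsingular this inverts to $(\vX_{\pi}'\vX_{\pi})^{-1} = \vPi'(\vX'\vX)^{-1}\vPi$ (assuming $\vX$ has full column rank, which holds in the construction of Section~\ref{sec:GeneratingMechanism} because the generating $\vx_i$ are mutually orthogonal by design). Plugging in gives
\begin{equation*}
\vX_{\pi}(\vX_{\pi}'\vX_{\pi})^{-1}\vX_{\pi}' \;=\; \vX\vPi\,\vPi'(\vX'\vX)^{-1}\vPi\,\vPi'\vX' \;=\; \vX(\vX'\vX)^{-1}\vX',
\end{equation*}
so that $\vP_{\pi} = \vI - \vX(\vX'\vX)^{-1}\vX' = \vP$. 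Uniqueness of the orthogonal projection then follows from the standard fact that the orthogonal projector onto a subspace is uniquely determined by that subspace; one can verify $\vP_{\pi}^2 = \vP_{\pi}$ and $\vP_{\pi}' = \vP_{\pi}$, and identify its range as the orthogonal complement of $\operatorname{col}(\vX) = \operatorname{col}(\vX_{\pi})$.

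A more conceptual alternative, which I would mention briefly, is to skip the algebra altogether: column permutation does not change the column space, and the orthogonal projection onto a subspace is intrinsic to the subspace, so $\vP_{\pi}$ must coincide with $\vP$. The algebraic route is preferable here only because it is self-contained and makes transparent the cancellation $\vPi\vPi' = \vI$.

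The main obstacle is essentially nonexistent -- this is a routine matrix-algebra fact. The only subtlety worth flagging is that $\vX'\vX$ must be invertible for the projector formula to be well-defined, which is satisfied in the context of Section~\ref{sec:GeneratingMechanism} by orthogonality of the $\vx_i$, and should be noted explicitly so the lemma chains correctly into the proof of Proposition~\ref{prop:one}.
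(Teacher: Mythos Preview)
Your proof is correct. The paper's own proof is the one-line conceptual argument you flag as an alternative: since $\vX$ and $\vX_\pi$ share the same column space, the projection theorem gives $\vP_\pi = \vP$ immediately. Your primary route---writing $\vX_\pi = \vX\vPi$ for a permutation matrix $\vPi$ and cancelling $\vPi\vPi' = \vI$ inside the projector formula---is a fully explicit algebraic verification that avoids invoking the projection theorem as a black box. It buys self-containment and makes the invertibility of $\vX'\vX$ (and hence the well-definedness of the projector formula) visible, which, as you note, is guaranteed here by the orthogonality of the $\vx_i$. The paper's version buys brevity. Either is entirely adequate for this routine fact.
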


\begin{proof}

Since $\vX$ and $\vX_{\pi}$ share the same column space, the result is immediate by the projection theorem.
\end{proof}

\noindent \rule{\linewidth}{0.1em}
\noindent \textbf{Proposition 1.} 
\textit{
    The columns of $\vW = \vX (\vX' \vX)^{-1/2}$ are exchangeable. That is, for any permutation $\pi$ of the set $\{1, \ldots, k\}$, $p([\vw_1, \ldots, \vw_k]) \overset{d}{=} p([\vw_{\pi(1)}, \ldots, \vw_{\pi(k)}])$.
}


\begin{proof}

We first show the columns of the matrix $\vX$ are exchangeable. 
That is, for any permutation $\pi$ of the set $\{1, \ldots, k\}$, $p([\vx_1, \ldots, \vx_k]) \overset{d}{=} p([\vx_{\pi(1)}, \ldots, \vx_{\pi(k)}])$.
Then, we use the exchangeability of $\vX$ to show exchangeability of $\vW$.

Define $\vX_{\pi_j} = [\vx_{\pi(1)}, \ldots, \vx_{\pi(j)}]$ and $\vP_{\pi(j)} = \vI - \vX_{\pi_j} (\vX_{\pi_j}'\vX_{\pi_j})^{-1} \vX_{\pi_j}' = \vP_j$.
To show exchangeablility, we show the characteristic function of $\vX$ is equivalent to the characteristic function of $\vX_{\pi_j}$.
For a $n \times k$ random matrix $\vX$, the characteristic function is defined as $\varphi(\vX) = E[\exp \{ i \mbox{tr} (\vT' \vX) \}] = E[\exp \{ i \sum_{\ell = 1}^k \vt_\ell' \vx_\ell \}]$, where $\vT = [\vt_1, \ldots, \vt_k]$ is a $n \times k$ matrix, $i$ is the imaginary unit, and tr$(\cdot)$ is the trace operator.
We show the proposition using proof by induction:
\begin{enumerate}
    \item For $k=1$, we have $\vX_{\pi_1} = \vx_{\pi(1)} = \vP_0\vz_{\pi(1)} \overset{d}{=} \vP_0\vz_1 = \vx_1 = \vX_1$, where $\vz_{\pi(1)} \overset{d}{=} \vz_1$ by lemma \ref{lemma:one}. Therefore, $\vX_1 \overset{d}{=} \vX_{\pi_1}$.

    \item Assume for $k=j$, $\vX_j \overset{d}{=} \vX_{\pi_j}$.

    \item By the characteristic function of $\vX_{\pi_{j+1}}$,
    \begin{align*}
        \varphi(\vX_{\pi_{j+1}}) & = E\left[ \exp\left\{ i \sum_{\ell=1}^{j+1}\vt_{\ell}'\vx_{\pi(\ell)} \right\} \right] \\
            & = E\left[ \exp\left\{ i \sum_{\ell=1}^{j}\vt_{\ell}'\vx_{\pi(\ell)} \right\} \exp\left\{ i \vt_{j+1}'\vx_{\pi(j+1)}  \right\} \right] \\
            & = E\left[ \exp\left\{ i \sum_{\ell=1}^{j}\vt_{\ell}'\vx_{\pi(\ell)} \right\} E[\exp\left\{ i \vt_{j+1}'\vx_{\pi(j+1)}  \right\} |\vX_{\pi_j}, \vOmega_{\pi(j+1)}]\right] \quad \mbox{\footnotesize{(iterative expectation)}} \\
            & = E\left[ \exp\left\{ i \sum_{\ell=1}^{j}\vt_{\ell}'\vx_{\pi(\ell)} \right\} E[\exp\left\{ i \vt_{j+1}'\vP_{\pi(j)}\vz_{\pi(j+1)}  \right\} |\vX_{\pi_j}, \vOmega_{\pi(j+1)}]\right] \\
            & = E\left[ \exp\left\{ i \sum_{\ell=1}^{j}\vt_{\ell}'\vx_{\pi(\ell)} \right\} \exp\left\{ \vt_{j+1}'\vP_{\pi(j)} \vOmega_{\pi(j+1)} \vP'_{\pi(j)} \vt_{j+1} \right\}\right].
    \end{align*}

    The induction hypothesis implies $\{\vX_j, \vP_j\} \overset{d}{=} \{\vX_{\pi_j}, \vP_{\pi(j)}\}$.
    Also, $\vOmega_{\pi(j+1)}$ is independent of $\vX_j$ and $\vP_j$. 
    Therefore, $\{\vX_j, \vP_j, \vOmega_{j+1}\} \overset{d}{=} \{\vX_{\pi_j}, \vP_{\pi(j)}, \vOmega_{\pi(j+1)}\}$ because $\vX_{\pi_j} \Rightarrow \vX_j$ by induction hypothesis, $\vP_{\pi(j)} \equiv \vP_j$ by lemma \ref{lemma:two}, and $\vOmega_{\pi(j+1)} \Rightarrow \vOmega_{j+1}$ because it is independent of $\vX_j$ and $\vP_j$ and it is exchangeable.
    Thus,    
    \begin{align*}
        \varphi(\vX_{\pi_{j+1}}) & = E\left[ \exp\left\{ i \sum_{\ell=1}^{j}\vt_{\ell}'\vx_{\ell} \right\} \exp\left\{ \vt_{j+1}'\vP_{j} \vOmega_{j+1} \vP'_{j} \vt_{j+1} \right\}\right] = \varphi(\vX_{j+1}),
    \end{align*}
    and $\vX_{j+1} \overset{d}{=} \vX_{\pi_{j+1}}$.

    The exchangeability of $\vW$ follows from the exchangeability of $\vX$.
    Specifically, because the diagonal matrix $\vR \equiv (\vX' \vX)^{-1/2} = \text{diag}[(\vx'_1 \vx_1)^{-1/2}, \ldots, (\vx'_k \vx_k)^{-1/2}] \equiv \text{diag}[r_1, \ldots, r_k]$ where the elements $r_1, \ldots, r_k$ are the norm of the random vectors $\vx_1, \ldots, \vx_k$, respectively, is simply a rescaling of the columns of $\vX$, and the permutation of the scaling is preserved, $\vW_{\pi} \overset{d}{=} \vW$.
\end{enumerate}

\end{proof}

\noindent \rule{\linewidth}{0.1em}

\noindent \textbf{Proposition 2.} 
\textit{ 
    $\vw_i|\vW_{i-1} \overset{d}{=} \vN_{i-1}\widetilde{\vw}_i|\vW_{i-1}$ where the columns of $\vN_{i-1}$ form an orthonormal basis for the null space of $\vW_{i-1}$ and $\widetilde{\vw}_i|\vW_{i-1} \sim \mbox{PN}_{n-i+1}(\vec{0}, \vN_{i-1}'\vOmega_i\vN_{i-1})$ is the projected weight function.
}

\begin{proof}
The following argument is similar to \citet{Hoff2007}, except now we account for dependence structure and the resulting distribution is different.
By construction, $\vw_i = \vP_{i-1}\vz_i/(\vz_i'\vP_{i-1}'\vP_{i-1}\vz_i)^{1/2}$ where $\vP_{i-1}$ has $n-i+1$ eigenvalues equal to 1 and the rest being 0.
Let the eigenvalue decomposition be $\vP_{i-1} = \vN_{i-1}\vN_{i-1}'$ where $\vN_{i-1}$ is an $n \times (n-i+1)$ matrix whose columns span the null space of $\vW_i$.
Making the substitution $\vP_{i-1} = \vN_{i-1}\vN_{i-1}'$,
\begin{align*}
    \vw_i & = \frac{\vP_{i-1}\vz_i}{(\vz_i'\vP_{i-1}'\vP_{i-1}\vz_i)^{1/2}} \\
          & = \frac{\vN_{i-1}\vN_{i-1}'\vz_i}{(\vz_i'\vN_{i-1}'\vN_{i-1}\vN_{i-1}\vN_{i-1}'\vz_i)^{1/2}} \\
          & = \vN_{i-1}\frac{\vN_{i-1}'\vz_i}{(\vz_i'\vN_{i-1}\vN_{i-1}'\vz_i)^{1/2}}.
\end{align*}
Note that $\vP_{i-1} = \vI - \vW_{i-1}\vW_{i-1}'$, so $\vw_i|\vW_{i-1} \overset{d}{=} \vN_{i-1}\frac{\vN_{i-1}'\vz_i}{(\vz_i'\vN_{i-1}\vN_{i-1}'\vz_i)^{1/2}}$.
Because $\vz_i \sim N_n(\vec{0}, \vOmega_i)$, we have $\vN_{i-1}'\vz_i|\vW_{i-1} \sim N_n(\vec{0}, \vN_{i-1}'\vOmega_i\vN_{i-1})$ and $\frac{\vN_{i-1}'\vz_i}{(\vz_i'\vN_{i-1}\vN_{i-1}'\vz_i)^{1/2}} \equiv \widetilde{\vw}_i|\vW_{i-1} \sim PN(\vec{0}, \vN_{i-1}'\vOmega_i\vN_{i-1})$.
\end{proof}

\newpage

\setcounter{section}{0}
\renewcommand{\thesection}{S}
\renewcommand{\thetable}{S}
\renewcommand{\thefigure}{S.\arabic{figure}}
\renewcommand{\theequation}{S.\arabic{equation}}
\setcounter{figure}{0}
\section{Supplemental Material}\label{sec:supplement}

\subsection{Full conditional distributions}\label{sec:FCD}

The diagonal matrix $\vD$ and the length-scale parameters $\rho_{u,i}$ and $\rho_{v,i}$ do not have conjugate updates and so we use a Metropolis-within-Gibbs step to estimate these parameters.
For all Metropolis steps, we use a truncated normal for the proposal distribution with the mean set to the most recently accepted value.
For $\vD$ the upper truncation bound is set to infinity and for $\rho_{u,i}$ and $\rho_{v,i}$ the upper truncation bound is set to the max distance for $\vU$ (e.g., greatest distance between spatial locations) and $\vV$ (e.g., greatest span between time points) divided by 2, respectively.
Because the variance of the proposal can influence the acceptance rate, we automatically tune the proposal variance for each parameter individually such that the acceptance rate is between 25\% and 45\%.

\vspace{2em}

\textbf{Sampling Algorithm}

For each iteration of the MCMC algorithm, do:
\begin{enumerate}
    \item Update $\vD$ using a Metropolis step
    
    \item For $i \in \{1, \ldots, k\}$ update $\vu_i|\vU_{-i} \overset{d}{=} \vN^u_{i} \widetilde{\vu}_i$ where $\widetilde{\vu}_i$
        \begin{align*}
            [\widetilde{\vu}_i|\cdot]  & \sim N(\vS_u^{-1}\vm_u, \vS_u^{-1}) \mathbb{I}(\widetilde{\vu}_i \in \mathcal{V}_{1,n}) \\
                              & \vm_u = d_i \vN_i^{u \trp} \vSigma^{-1} \vE_i \vv_i \nonumber \\
                              & \vS_u = d_i^2 (\vN^{u \trp}_{i}\vOmega^u_i\vN^u_{i})^{-1} + d_i^2 \vN_i^{u \trp} \vSigma^{-1} \vN_i^{u}. \nonumber
        \end{align*}
    \item For $i \in \{1, \ldots, k\}$ update $\vv_i|\vV_{-i} \overset{d}{=} \vN^v_{i} \widetilde{\vv}_i$ where $\widetilde{\vv}_i$
        \begin{align*}
            [\widetilde{\vv}_i|\cdot] & \sim N(\vS_v^{-1}\vm_v, \vS_v^{-1}) \mathbb{I}(\widetilde{\vv}_i \in \mathcal{V}_{1,m}) \\
                              & \vm_v = d_i \vN_i^{v \trp} \vE_i' \vSigma^{-1} \vu_i \nonumber\\
                              & \vS_v = d_i^2 (\vN^{v \trp}_{i}\vOmega^v_i\vN^v_{i})^{-1} + d_i^2 \vu_i' \vSigma^{-1} \vu_i \vI_m. \nonumber
        \end{align*}
    
    \item Recall, we parameterize $\vSigma = \sigma^2 \vI_n$.
            The full conditional distribution for $\sigma^2$ is
            \begin{align*}
                [a|\cdot] & \sim IG((\xi + 1)/2, (1/A^2) + \xi/\sigma) \\
                [\sigma^2|\cdot] & \sim IG((nm + \xi)/2, \xi/a + \mbox{vec}(\vZ - \vU \vD \vV')'\mbox{vec}(\vZ - \vU \vD \vV')/2).
            \end{align*}
            We specify $\xi = 1$ and $A = 10^5$ which corresponds to the prior $\sigma \sim \mbox{Half-}t(\xi, A) \equiv \mbox{Half-cauchy}(A)$.
    
    \item Recall, we parameterize $\vOmega_u(\theta_{u,i}) = \sigma^2_{u,i} \vC_u(\theta_{u,i})$.
            For $i \in \{1, \ldots, k\}$ update $\sigma^2_{u,i}$ from
            \begin{align*}
                [a_{u,i}|\cdot] & \sim IG((\xi + 1)/2, (1/A^2) + \xi/\sigma_{u,i}) \\
                [\sigma^2_{u,i}|\cdot] & \sim IG((n - k + 1 + \xi)/2, \xi/a_{u,i} + (d_i^2 \widetilde{\vu}_i' (\vN^{u \trp}_{i}\vOmega^u_i\vN^u_{i})^{-1} \widetilde{\vu}_i)/2),
            \end{align*}
            with $\xi = 1$ and $A = 10^5$.
    
    \item Recall, we parameterize $\vOmega_v(\theta_{v,i}) = \sigma^2_{v,i} \vC_v(\theta_{v,i})$.
            For $i \in \{1, \ldots, k\}$ update $\sigma^2_{v,i}$ from
            \begin{align*}
                [a_{v,i}|\cdot] & \sim IG((\xi + 1)/2, (1/A^2) + \xi/\sigma_{v,i}) \\
                [\sigma^2_{v,i}|\cdot] & \sim IG((m - k + 1 + \xi)/2, \xi/a_{v,i} + (d_i^2 \widetilde{\vv}_i' (\vN^{v \trp}_{i}\vOmega^v_i\vN^v_{i})^{-1} \widetilde{\vv}_i)/2),
            \end{align*}
        with $\xi = 1$ and $A = 10^5$.
    
    \item For $i \in \{1, \ldots, k\}$ update $\rho^2_{u,i}$ using a Metropolis step
    
    \item For $i \in \{1, \ldots, k\}$ update $\rho^2_{v,i}$ using a Metropolis step
\end{enumerate}

\subsection{Identity correlation}\label{sec:IC}

When $\vOmega_i = \vI_n$, $\widetilde{\vw}_k$ in proposition 2 is uniformly distributed on the Stiefel manifold.
To see this, note that for $\vz_k \sim N(0, \vI)$, $\vN_{k-1}'\vz_k \sim \mbox{N}_{n-k+1}(0, \vI)$, and $\frac{\vN_{k-1}'\vz_k}{(\vz_i'\vN_{i-1}\vN_{i-1}'\vz_i)^{1/2}}$ is uniformly distributed on the $n-k+1$ sphere.
Also, we see proposition 1 is now equivalent to \citet{Hoff2007}, and $\vW$ is the uniform probability measure on $\mathcal{V}_{k,n}$.

The resulting full conditional distributions for $\widetilde{\vu}_i$ and $\widetilde{\vv}_i$ when $\vOmega^u_i = \vI_{n}$ and $\vOmega^v_i = \vI_{m}$ for the SVD model in~\ref{sec:FCD} become the von-Mises Fisher distribution, which is equivalent to the full conditionals of \citet{Hoff2007}.
To see this, note the mean of $[\widetilde{\vu}_i|\cdot]$ is $\vS_u^{-1}\vm_u = \frac{1}{\sigma^2 + 1}\frac{1}{d_i} \vN_i^{u \trp} \vE_i \vv_i$ and the covariance is $\vS_u^{-1} = \left(d_i^2 + \frac{d_i^2}{\sigma^2}\right)^{-1}$.
The full conditional distribution
\begin{align*}
    [\widetilde{\vu}_i|\cdot] & \propto \exp\left\{-\frac{1}{2}tr\left[-2\widetilde{\vu}'_i d_i \vN_i^{u \trp} \vE_i \vv_i \left( \frac{1}{\sigma^2 + 1} + \frac{1}{\sigma^4 + \sigma^2} \right)\right] \right\} \\
    & = \exp\left\{-\frac{1}{2}tr\left[-2\widetilde{\vu}'_i d_i \vN_i^{u \trp} \vE_i \vv_i \left( \frac{1}{\sigma^2} \right)\right] \right\},
\end{align*}
which is the kernel of the von-Mises Fisher distribution.
The same result holds for $\widetilde{\vv}_i$.

\subsubsection{Relationship to algorithmic SVD}\label{sec:ASVD}

Computing the SVD of $\vY$,
\begin{align*}
    \vY & = \vU \vD \vV' \\
    \vY & = \vU_{-i} \vD_{-i} \vV'_{-i} + d_{i} \vu_{i} \vv'_{i} \\
    \vY - \vU_{-i} \vD_{-i} \vV'_{-i} & = \vE_{-i} =  d_{i} \vu_{i} \vv'_{i} \\
    \frac{1}{d_i}\vE_{-i}\vv_{i} & = \vu_{i},
\end{align*}
so the $i$th basis function can be expressed as a function of the data and other basis functions.
The mean of the full conditional distribution $[\widetilde{\vu}_i|\cdot]$ is $\vS_u^{-1}\vm_u = \frac{1}{\sigma^2 + 1}\frac{1}{d_i} \vN_i^{u \trp} \vE_i \vv_i$, and $E[\widetilde{\vu}_i|\cdot] \rightarrow \frac{1}{d_i} \vN_i^{u \trp} \vE_i \vv_i$ as $\sigma^2 \rightarrow 0$.
Mapping to the original space, $\vN_i^{u} E[\widetilde{\vu}_i|\cdot] = \frac{1}{d_i} \vN_i^{u} \vN_i^{u \trp} \vE_i \vv_i = \frac{1}{d_i} \vE_i \vv_i$.
While not shown here, the same argument applies for $\vV$.
Therefore, when the covariance is taken to be the identity matrix, the posterior mean of the basis functions is equivalent to the C-SVD basis functions.

To see the relationship, we repeat one of the simulation conducted in Section~\ref{sec:rankSimulation} with $SNR = 5$, $k = 5$, and set the correlation matrices $\vC_u$ and $\vC_v$ to be the identity.
Here, we still estimate the basis function specific variance $\sigma^2_{u,i}$ and $\sigma^2_{v,i}$.
We obtain 10000 samples from the posterior, discarding the first 5000 as burnin.
The resulting estimates for the $\vU$ and $\vV$ basis functions are shown in Figure~\ref{fig:IdentityBasis}, where the posterior mean of the basis functions (blue) is nearly identical to the C-SVD estimates (red).
In all cases, we see the 95\% intervals (blue shaded region) cover the C-SVD estimates but has $\approx$95\% coverage of the true line (black).

\begin{figure}
    \centering
    \includegraphics[width = \linewidth]{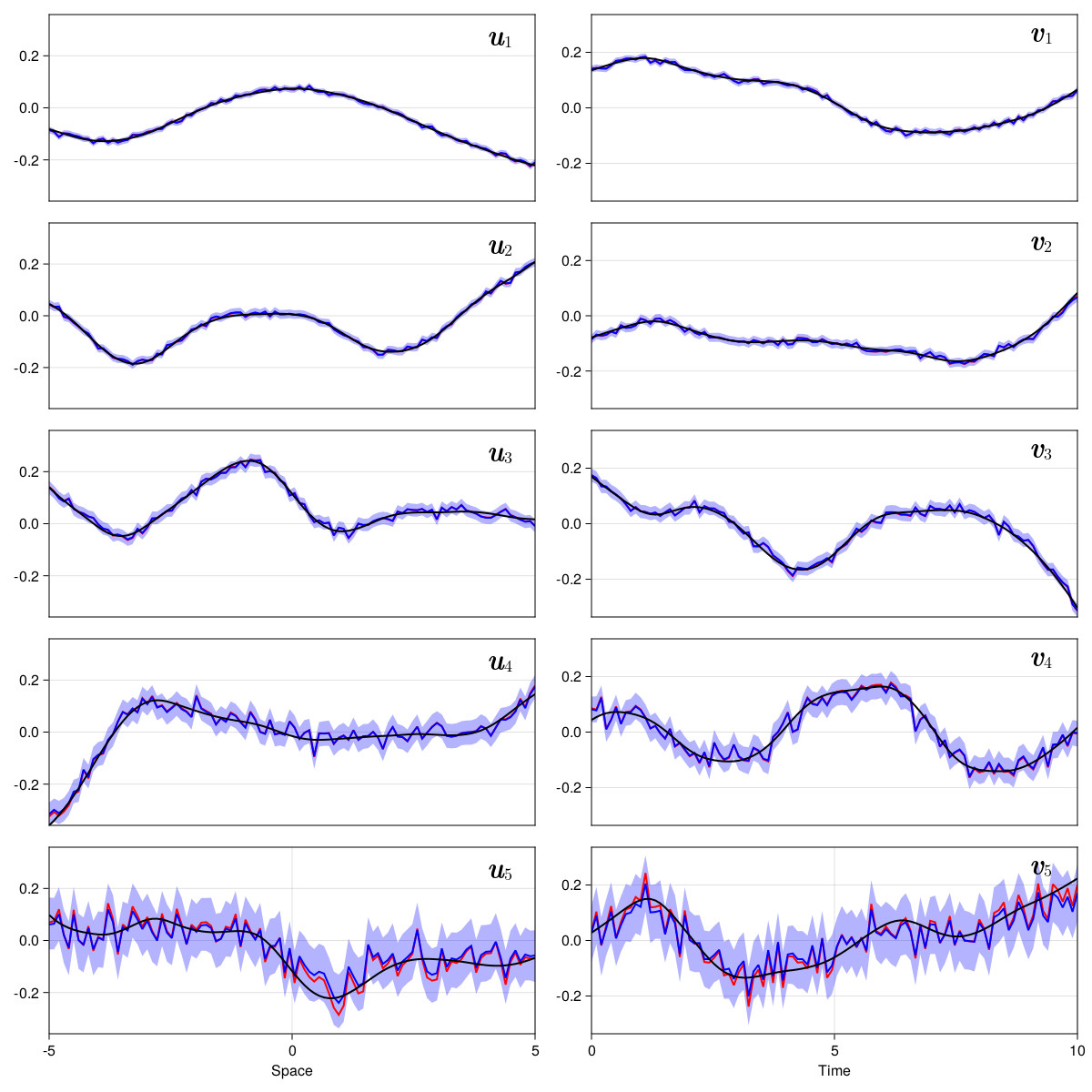}
    \caption{Posterior mean (blue line), 95\% credible intervals (shaded blue region), truth (black line), and C-SVD estimate (red line) for the $\vU$ and $\vV$ basis function.}
    \label{fig:IdentityBasis}
\end{figure}

\FloatBarrier

\subsection{Computation and scalability}\label{sec:scale}

While the proposed prior is relatively simple to specify and implement, there are some computational aspects to consider.
On one hand, the fact that the prior is conjugate with a normal data distribution means that MCMC updates for the columns of $\vU$ and $\vV$ can be obtained in a straightforward manner.
On the other hand, calculating the full conditional distributions (from which the Gibbs draws are sampled) is computationally intensive for large $n$.  
From the formulation in Section~\ref{sec:PSVD}, the full conditional distributions for the columns of $\vU$ and $\vV$ involve matrix inverses $(\vN^{u \trp}_i \vOmega^u_i \vN^u_i)^{-1}$ and $(\vN^{v \trp}_i \vOmega^v_i \vN^v_i)^{-1}$, respectively (see supplement \ref{sec:FCD}), each of which are dense $(n-k+1)\times(n-k+1)$ and $(m-k+1)\times(m-k+1)$ matrices, respectively.
Therefore, in order to update $\vU$ and $\vV$ once in an MCMC iteration, we need to calculate $2k$ matrix inverses (one for each of the $k$ columns of $\vU$ and $\vV$), which is computationally challenging for large $n$ or $m$.
Furthermore, updating the hyperparameters of the kernel (e.g., the length-scale parameters $\rho_{u,i}$ and $\rho_{v,i}$) requires Metropolis-Hastings steps.
In this case, the likelihood involves a multivariate Normal density: when the covariance of the multivariate Normal is non-diagonal and dense (as is the case here), the number of flops associated with evaluating the determinant and solving quadratic forms scales with $\mathcal{O}(n^3)$.
Again, each iteration of the MCMC requires $2k$ of these calculations.
As such, without significant computing resources, the required computation for the model as-is proves difficult for data where either $n$ or $m$ is greater than 1000. 
More specifically, Figure~\ref{fig:scalability} shows an estimate of the amount of time needed to update all parameters in a single iteration of the MCMC for the special case of $k=5$ across different sample sizes $n$ and $m$ on a personal laptop.

\begin{figure}[t]
    \centering
    \includegraphics[width = \linewidth]{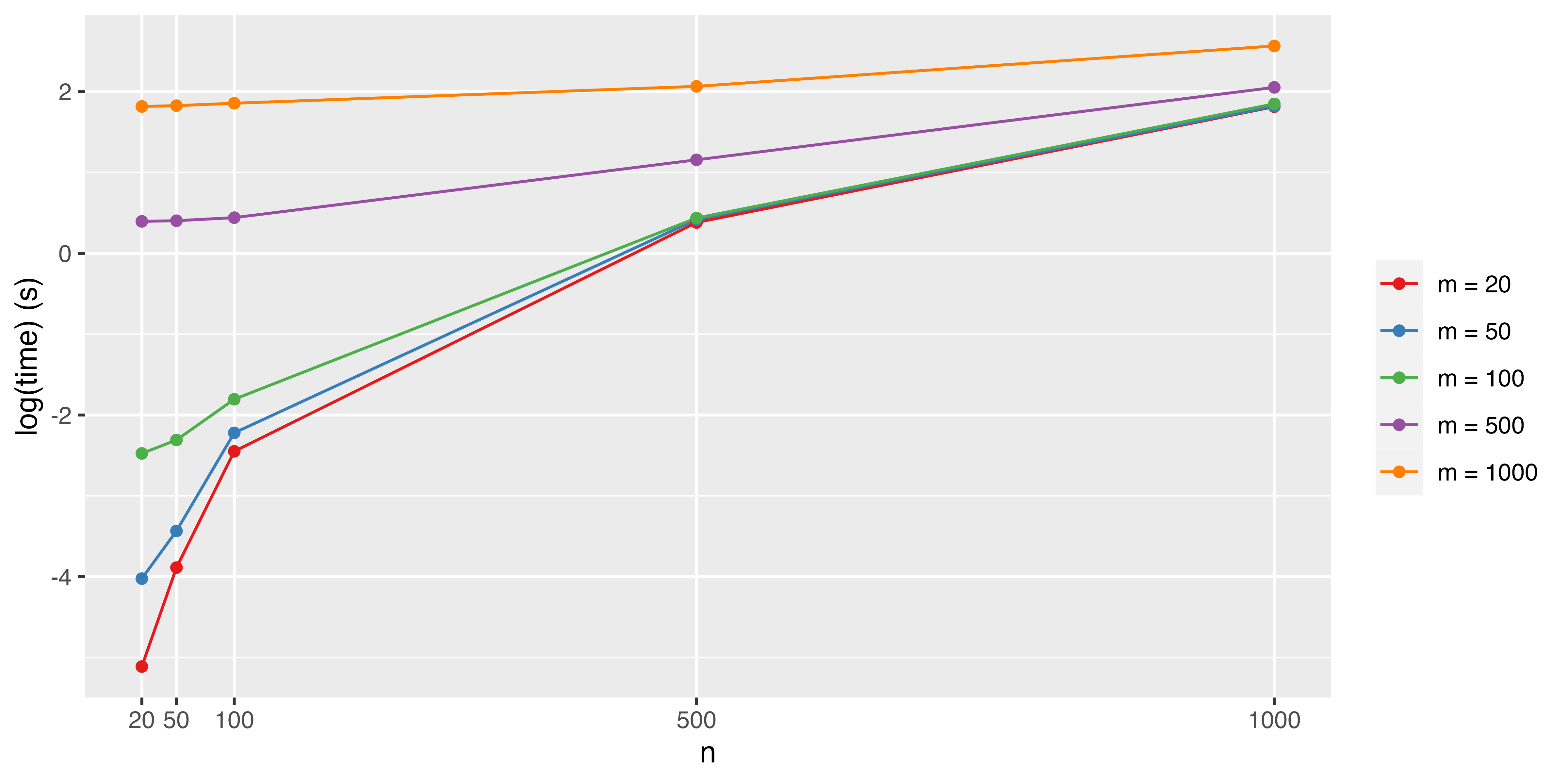}
    \caption{Median log computation time for a single MCMC iteration as a function of sample size $n$ (x-axis) and $m$ (line color).}
    \label{fig:scalability}
\end{figure}

In spite of these apparent limitations, there are a variety of approaches one could take to reduce the associated computational burdens of this model.
The simplest approach would be to parameterize the covariance matrix as $\vOmega_i = \sigma^2_i\vC_i(\vtheta)$ where $\vtheta$ is specified and not estimated: this would remove the Metropolis-Hastings steps required to estimate $\rho_{u,i}$ and $\rho_{v,i}$.
Alternatively, one could use a compactly-supported kernel function \citep[see, e.g.,][]{Wendland1995, Buhmann2000, Genton2000, Gneiting2002, Melkumyan2009} and leverage sparse matrix techniques. 
These approaches are targeted at reducing the cost of estimating $\vC_i(\vtheta_i)$ and the associated parameters.
In either case, however, our proposed prior does have the disadvantage of relying on a column-wise sampling strategy and the corresponding matrix calculations needed to sample each column.
Specifically, within each MCMC iteration, there is a required $\mathcal{O}(n^3)$ cost of computing the orthonormal basis for the null-space $\vN_i^u$ and $\vN_i^v$ and the ensuing inverses.
In other words, we must calculate the inverse of $\vN^{(\cdot) \trp}_i \vOmega^{(\cdot)}_i \vN^{(\cdot)}_i$ which is dense irrespective of the sparsity of $\vOmega^{(\cdot)}_i$.
For this reason, implementing sparse matrix techniques for the $\vOmega_i$ will not solve this challenge.

\FloatBarrier

\subsection{Projected normal distribution}\label{sec:PND}

Let $\vz_j \sim N_n(\vec{0}, \vOmega), j = 1, \ldots, K$ and $\vZ = [\vz_1, \ldots, \vz_K]$.
Then $\vW = \vZ/ \|\vZ\| \sim PN(\vec{0}, \vOmega)$ where $\vw_j$ is a length-$n$ directional vector with $n-1$ angles $\vtheta_j = [\theta_{1,j}, \theta_{2,j}, \ldots, \theta_{n-1,j}]$.
Using spherical coordinates, $r_j = \|\vz_j\| = \sqrt{z_{1,j}^2 + \cdots + z_{n,j}^2}$,
\begin{align*}
    w_{1,j} & = \cos(\theta_{1, j}) \\
    w_{2,j} & = \sin(\theta_{1, j}) \cos(\theta_{2, j}) \\
     & \vdots \\
    w_{n-1,j} & = \sin(\theta_{1, j}) \cdots \sin(\theta_{n-2, j})\cos(\theta_{n-1, j}) \\
    w_{n,j} & = \sin(\theta_{1, j}) \cdots \sin(\theta_{n-2, j})\sin(\theta_{n-1, j})
\end{align*}
and
\begin{align*}
    z_{1,j} & = r_j \cos(\theta_{1, j}) \\
    z_{2,j} & = r_j \sin(\theta_{1, j}) \cos(\theta_{2, j}) \\
     & \vdots \\
    z_{n-1,j} & = r_j \sin(\theta_{1, j}) \cdots \sin(\theta_{n-2, j})\cos(\theta_{n-1, j}) \\
    z_{n,j} & = r_j \sin(\theta_{1, j}) \cdots \sin(\theta_{n-2, j})\sin(\theta_{n-1, j})
\end{align*}
with $r_j \geq 0$, $\theta_{1,j}, \theta_{2,j}, \ldots, \theta_{n-2,j} \in [0, \pi]$, and $\theta_{n-1,j} \in [0, 2\pi]$.
Augmenting the distribution with its latent length $r_j$, we get the joint density of $(r_j, \vw_j)$ is
\begin{align*}
    p(r_j, \vw_j) = (2 \pi)^{-n/2}|\vOmega|^{-1/2}\exp\left\{-\frac{1}{2}(r_j \vw_j)'\vOmega^{-1}(r_j \vw_j) \right\}r_i^{n-1} \mathbb{I}(\vw_i \in \mathcal{V}_{1,n}),
\end{align*}
where the area element on the unit sphere is $r_j^{n-1}sin^{n-2}(\theta_{1,j})sin^{n-3}(\theta_{2,j})\ldots sin(\theta_{n-2,j})dr_jd\theta_{1,j}d\theta_{2,j}\ldots d\theta_{n-1,j}$.
For properties of this distribution, see \citet{Hernandez-Stumpfhauser2017}.

\subsection{Additional simulation figures}\label{sec:SimFigs}

\begin{figure}[ht]
    \centering
    \includegraphics[width = 0.5\linewidth]{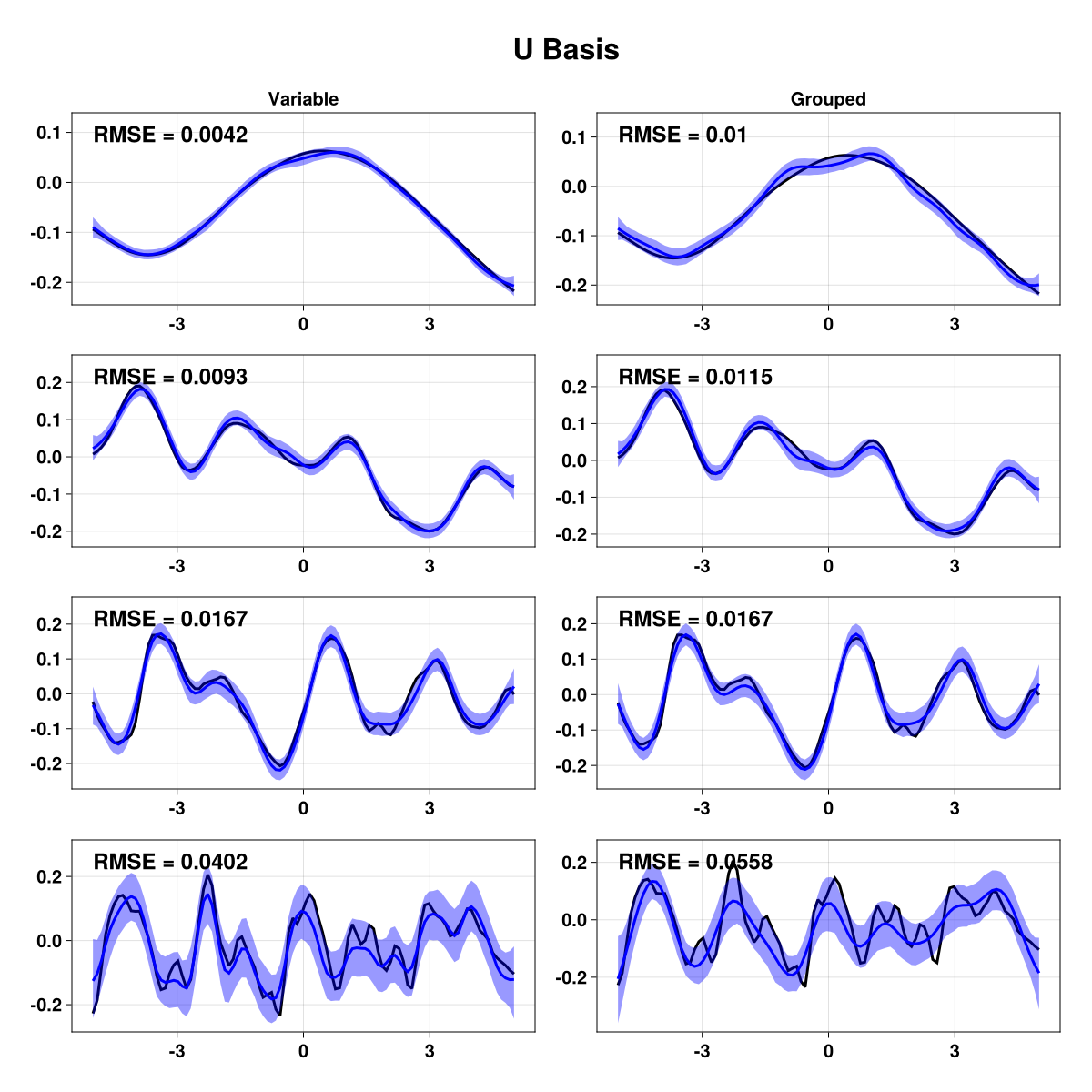}%
    \includegraphics[width = 0.5\linewidth]{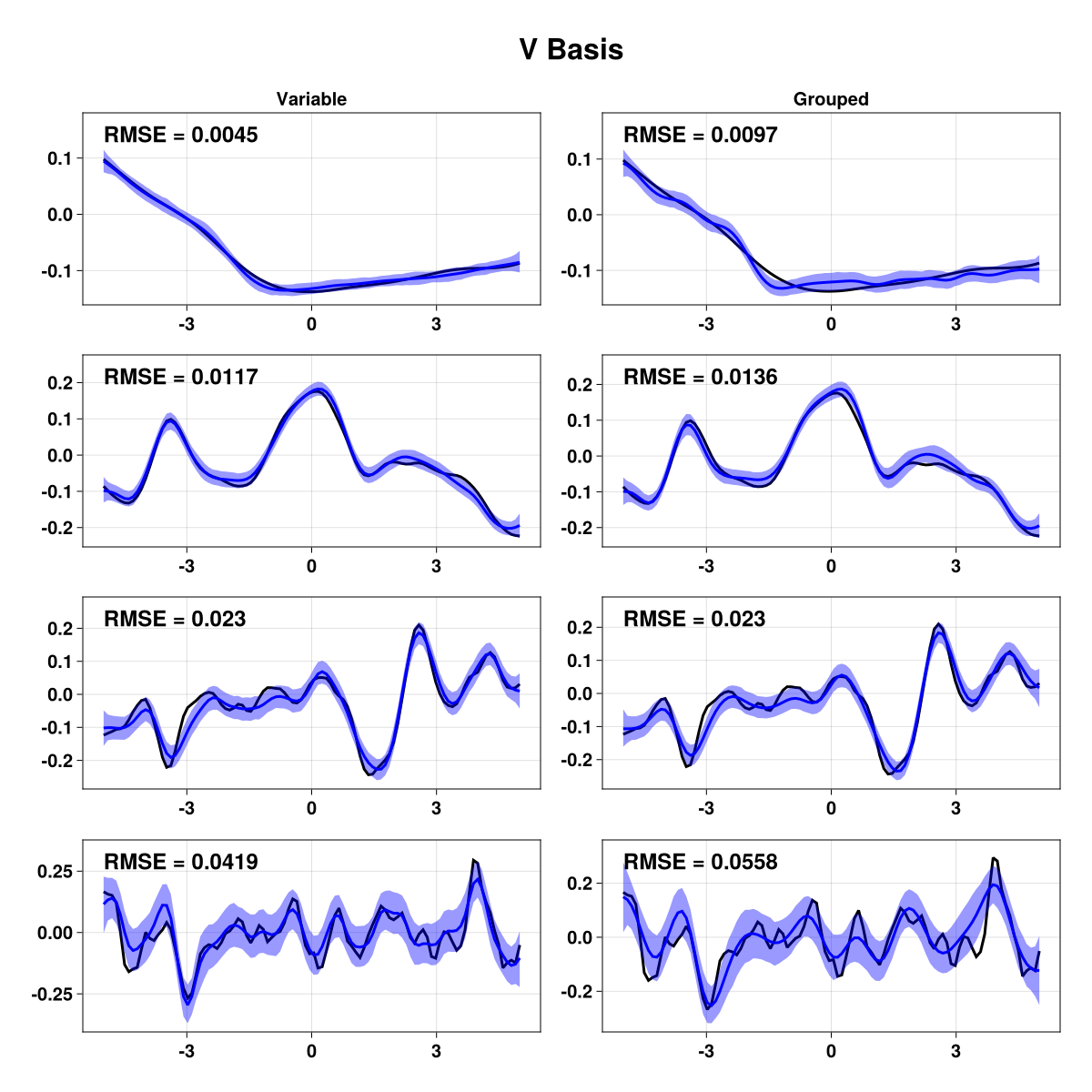}
    \caption{Randomly chosen results from one simulation described in Section \ref{sec:variablelength} with a SNR of 1. The $\vU$ ($\vV$) basis functions are shown on the left (right) and within the sub-plot the results from the variable (grouped) model are shown on the left (right) where the top row corresponds to the first basis function (e.g., $\vu_1$ or $\vv_1$) and the bottom row corresponds to the fourth basis function (e.g., $\vu_4$ or $\vv_4$). In each panel, the black line is the true basis function, blue line is the posterior mean, and blue shaded region denotes the 95\% credible interval (CI).}
    \label{fig:variableLengthPlot}
\end{figure}

\begin{figure}[ht]
    \centering
    \includegraphics[width = \textwidth]{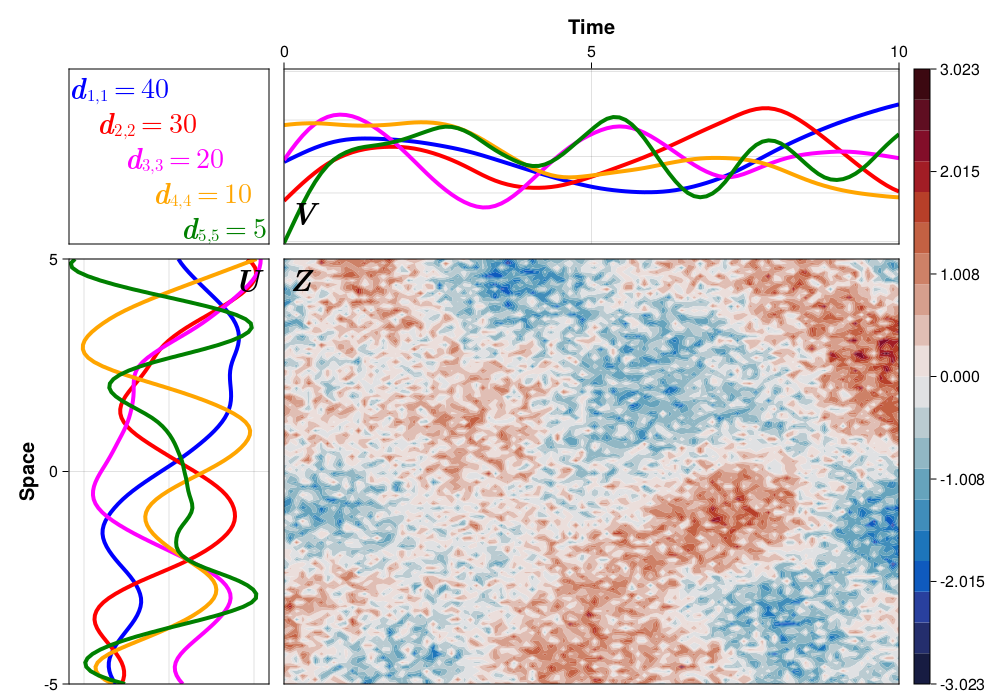}
    \caption{Randomly simulated data $\vZ$ (main plot) with a signal-to-noise ratio of 1 and the randomly simulated $\vU$ (left) and $\vV$ (top) basis functions.}
    \label{fig:data1D}
\end{figure}

\begin{figure}[t]
    \centering
    \includegraphics[width = \textwidth]{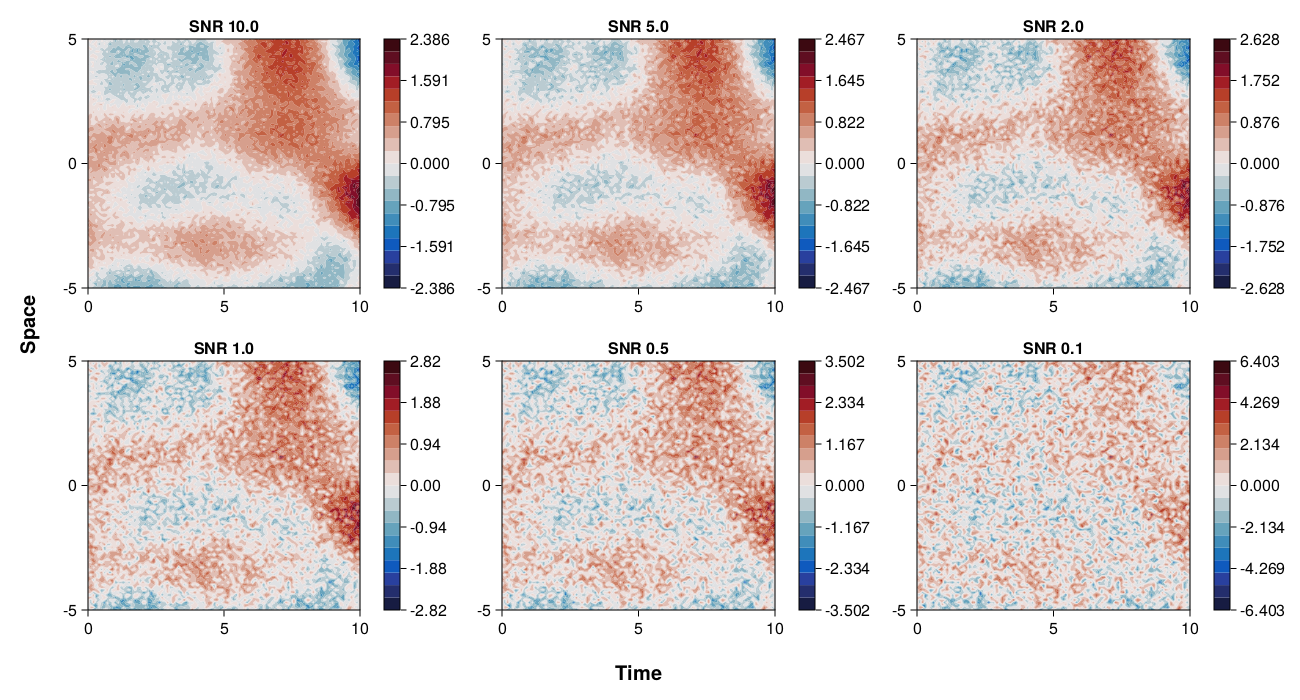}
    \caption{Example of simulated data with varying levels of signal-to-noise ratio.}
    \label{fig:simdata1D}
\end{figure}

\FloatBarrier

\subsection{Surface air temperature}\label{sec:airTempSupp}

\begin{figure}
    \centering
    \includegraphics[width = 0.9\linewidth]{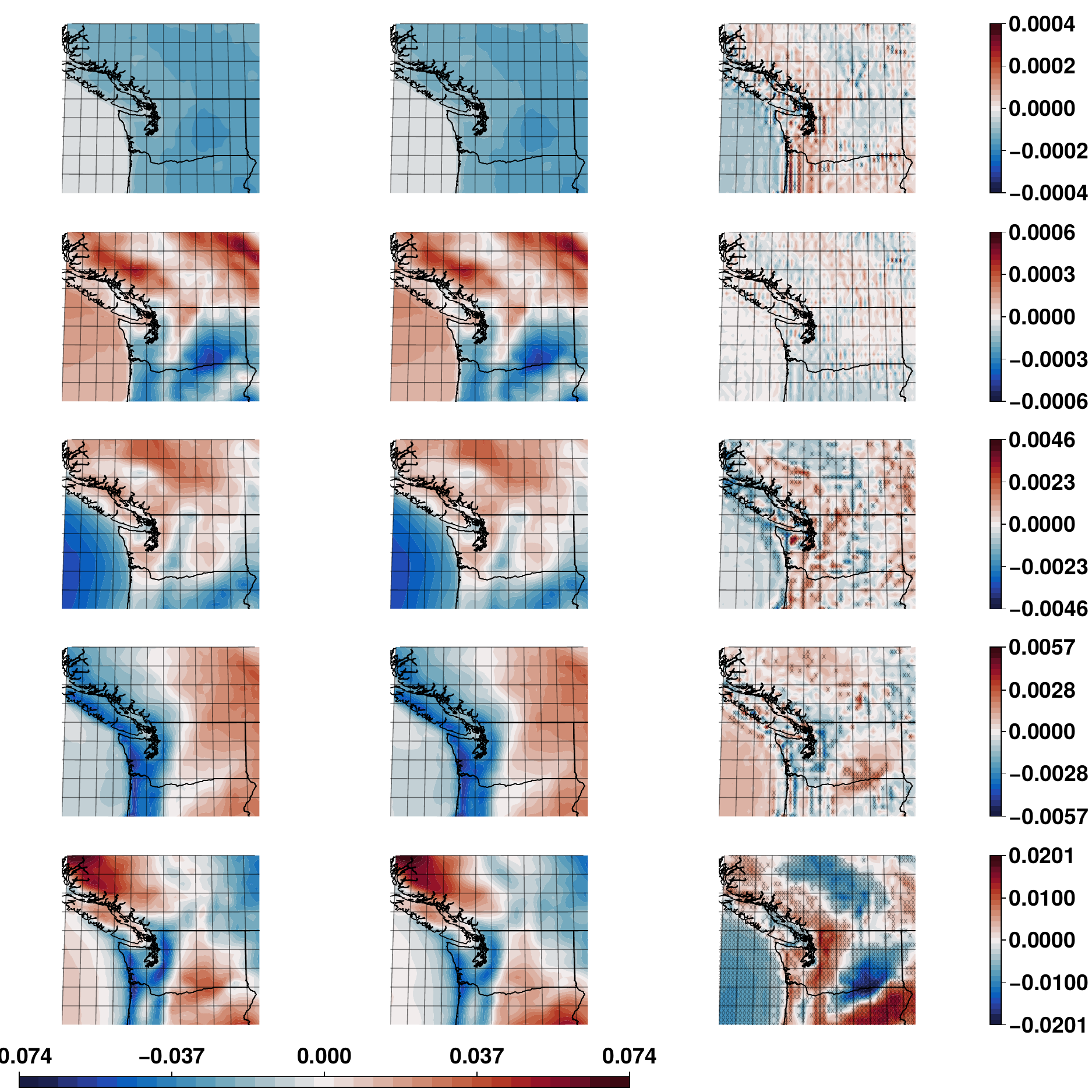}
    \caption{Estimated spatial basis functions $\vu_1$ (top) through $\vu_{5}$ (bottom). The left column are the estimates from the deterministic SVD, the middle column are the posterior means, and the right column are the posterior difference between the posterior mean and the algorithmic estimate where locations whose 95\% credible interval does not cover zero are denoted with an `x'.} 
    \label{fig:U15basis}
\end{figure}

\begin{figure}
    \centering
    \includegraphics[width = 0.9\linewidth]{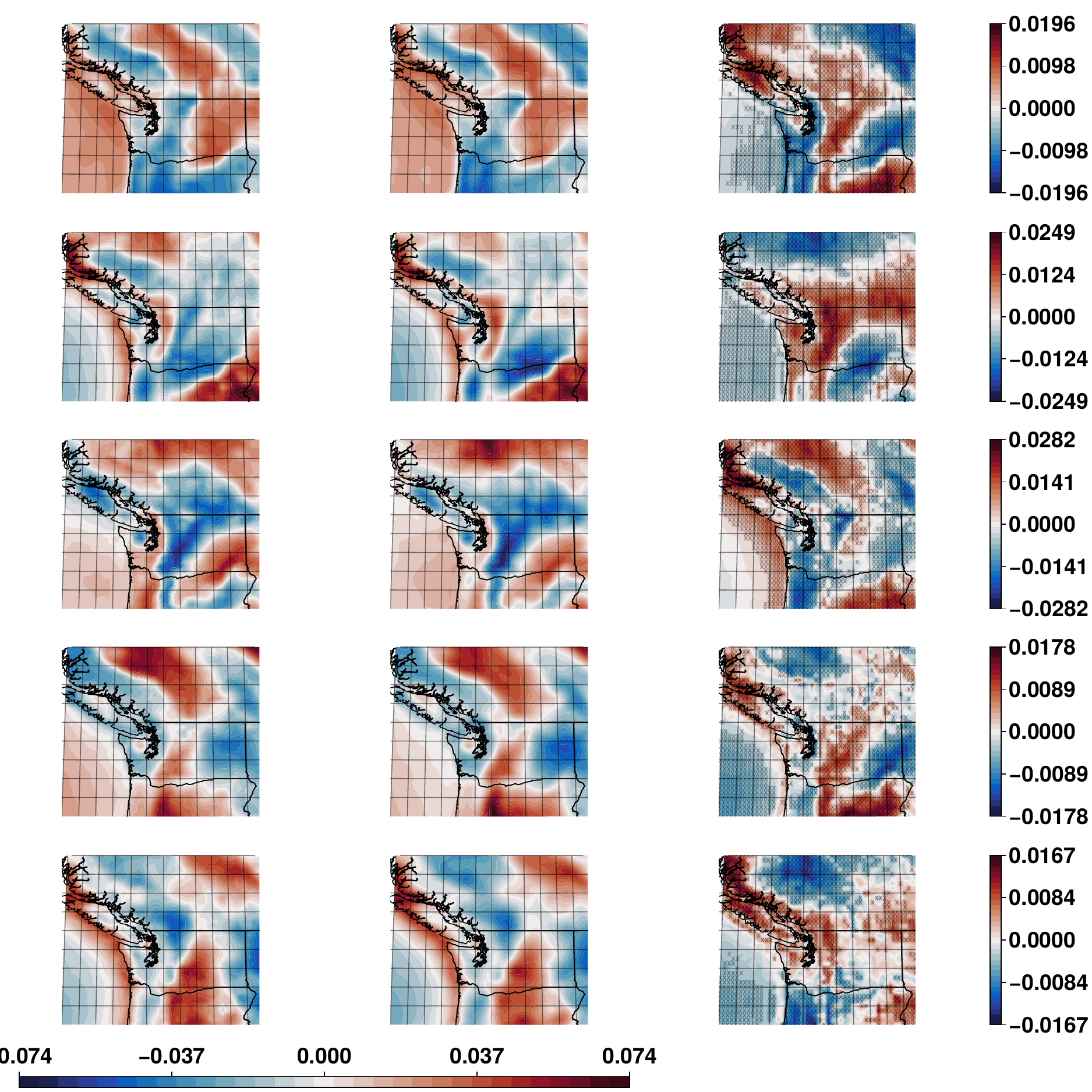}
    \caption{Estimated spatial basis functions $\vu_6$ (top) through $\vu_{10}$ (bottom). The left column are the estimates from the deterministic SVD, the middle column are the posterior means, and the right column are the posterior difference between the posterior mean and the algorithmic estimate where locations whose 95\% credible interval does not cover zero are denoted with an `x'.} 
    \label{fig:U610basis}
\end{figure}

\begin{figure}
    \centering
    \includegraphics[width = 0.9\linewidth]{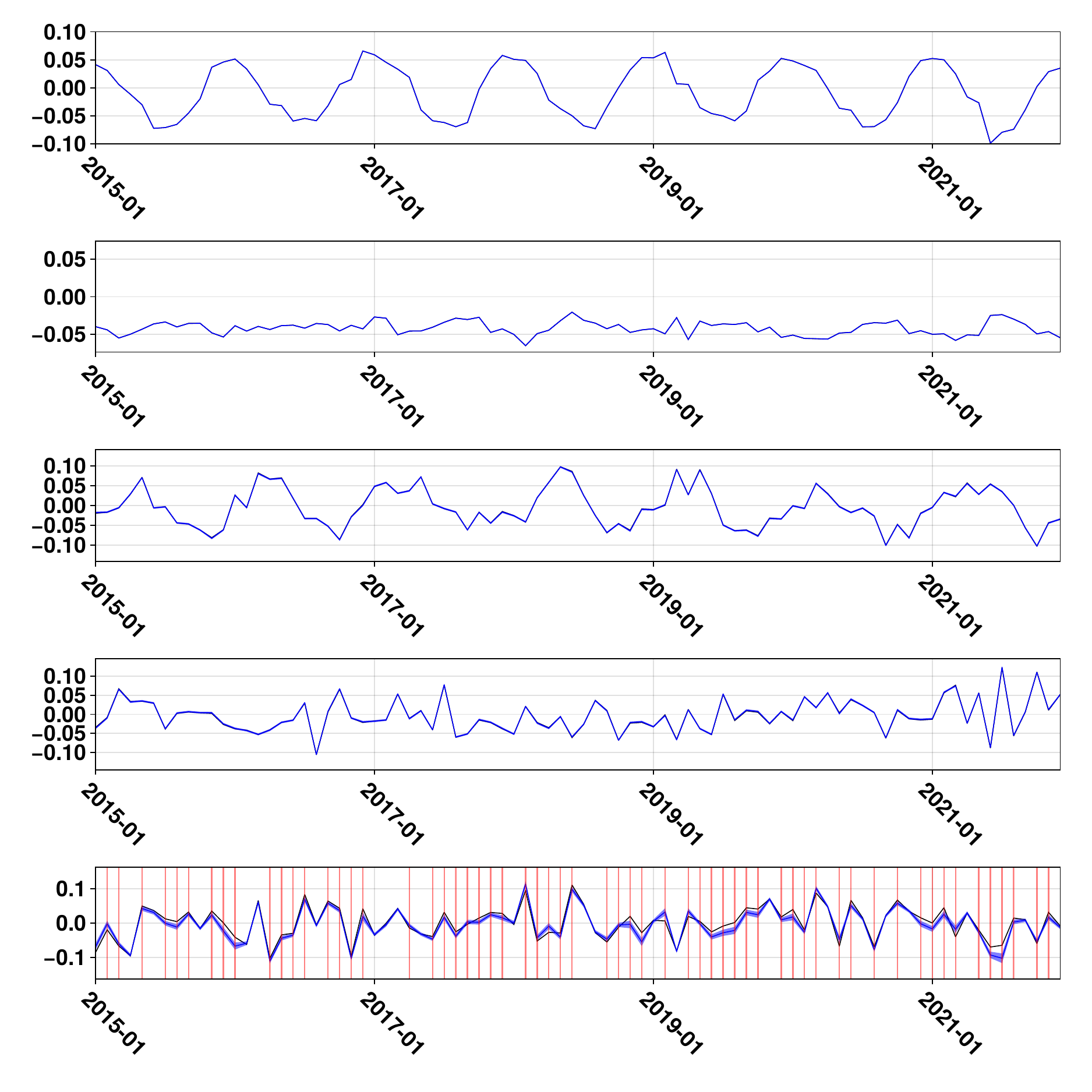}
    \caption{Estimated temporal basis functions $\vv_1$ (top) through $\vv_5$ (bottom) from January 2010 to December 2021. The black line is the algorithmic estimate, the solid blue line is the posterior mean, and the blue shaded regions are the 95\% credible intervals. Because it is difficult to see the differences, time points where the 95\% credible interval does not cover the deterministic estimate are marked with a vertical line.}
    \label{fig:V15basis}
\end{figure}

\begin{figure}
    \centering
    \includegraphics[width = 0.9\linewidth]{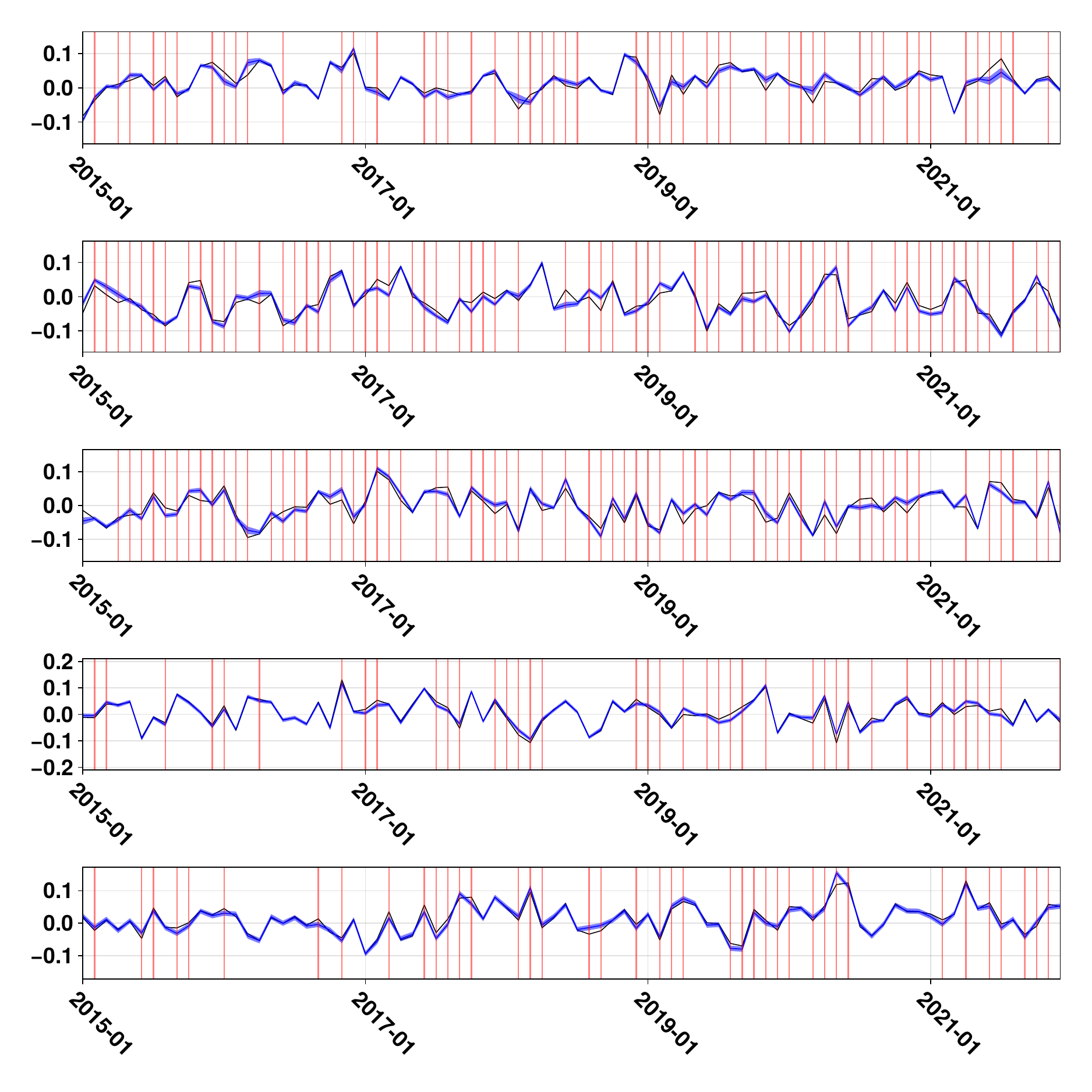}
    \caption{Estimated temporal basis functions $\vv_6$ (top) through $\vv_{10}$ (bottom) from January 2010 to December 2021. The black line is the algorithmic estimate, the solid blue line is the posterior mean, and the blue shaded regions are the 95\% credible intervals. Because it is difficult to see the differences, time points where the 95\% credible interval does not cover the deterministic estimate are marked with a vertical line.}
    \label{fig:V610basis}
\end{figure}

\end{document}